\algrenewcommand\algorithmicrequire{\textbf{Input:}}
\algrenewcommand\algorithmicensure{\textbf{Output:}}
\newcolumntype{Y}{>{\RaggedRight\arraybackslash}X}
\newtheorem{definition}{Definition}
\newtheorem{theorem}{Theorem}
\newtheorem{lemma}{Lemma}
\newcommand{\eop}{\hspace*{\fill}\mbox{$\Box$}}     
\newenvironment{proof}{\paragraph{Proof:}}{\hfill$\square$}
\newcounter{example}
\renewcommand{\theexample}{\arabic{example}}
\newenvironment{example}{
        \vspace{1ex}
        \refstepcounter{example}
        {\noindent\bf Example \theexample:}}
	{\eop\vspace{1ex}}
\newcommand{\squishlist}{
 \begin{list}{$\bullet$}
  { \setlength{\itemsep}{0pt}
     \setlength{\parsep}{1pt}
     \setlength{\topsep}{1pt}
     \setlength{\partopsep}{0pt}
     \setlength{\leftmargin}{1em}
     \setlength{\labelwidth}{1em}
     \setlength{\labelsep}{0.5em} } }
\newcommand{\squishend}{
  \end{list}
}
\definecolor{americanrose}{rgb}{1.0, 0.01, 0.24}
\definecolor{airforceblue}{rgb}{0.36, 0.54, 0.66}
\definecolor{ao(english)}{rgb}{0.0, 0.5, 0.0}
\definecolor{ao}{rgb}{0.0, 0.0, 1.0}
\newcommand{\mohsen}[1]{\textcolor{red}{Mohsen: #1}}
\newcommand{\eat}[1]{}
\newcommand{\Gee}{\mathcal{G}}
\newcommand{\gee}{\mathbf{g}}
\newcommand{\np}{{\sc NP}}
\newcommand{\eps}{\varepsilon}
\newcommand{\at}[1]{{\tt \small #1}\xspace}
\newcommand{\setc}{{\sc set cover}\xspace}
\newcommand{\fsc}{{\sc fair set cover}\xspace}
\newcommand{\kmc}{{max $k$-color cover}\xspace}
\newcommand{\wkmc}{{weighted max $k$-color cover}\xspace}
\newcommand{\Wratio}{\Delta}
\newcommand{\Groups}{\Gee}
\newcommand{\group}{\gee}
\renewcommand{\Re}{\mathbb{R}}%
\def\mparagraph#1{\par\noindent\textbf{#1.}\quad}
\newcommand{\poly}{\operatorname{poly}}
\newcommand{\NP}{\mathsf{NP}}
\newcommand{\element}{e}
\newcommand{\Elements}{U}
\newcommand{\Sets}{\mathcal{S}}
\newcommand{\set}{S}
\newcommand{\weight}{w}
\newcommand{\cover}{X}
\newcommand{\Wcover}{X_{\weight}}
\newcommand{\optCover}{\cover^*}
\newcommand{\optWCover}{\Wcover^*}
\newcommand{\fairprob}{FSC}
\newcommand{\fairWprob}{FWSC}
\newcommand{\Gfairprob}{GFSC}
\newcommand{\GfairWprob}{GFWSC}
\newcommand{\kcover}{C}
\newcommand{\newsize}{p}
\newcommand{\newsizeGroup}{p}
\newcommand{\fractionGroup}{f}
\newcommand{\uncovered}{\Elements^-}
\DeclareMathOperator*{\argmin}{arg\,min}
\begin{document}
\title{Fair Set Cover}

\author{Mohsen Dehghankar}
\affiliation{%
  \institution{University of Illinois Chicago}
  \city{Chicago}
  \state{IL}
  \country{USA}
}
\email{mdehgh2@uic.edu}

\author{Rahul Raychaudhury}
\orcid{0009-0009-6610-4404}
\affiliation{%
  \institution{Duke University}
  \city{Durham}
  \state{NC}
  \country{USA}
}
\email{rahul.raychaudhury@duke.edu}

\author{Stavros Sintos}
\orcid{0000-0002-2114-8886}
\affiliation{%
  \institution{University of Illinois Chicago}
  \city{Chicago}
  \state{IL}
  \country{USA}
}
\email{stavros@uic.edu}

\author{Abolfazl Asudeh}
\orcid{0000-0002-5251-6186}
\affiliation{%
  \institution{University of Illinois Chicago}
  \city{Chicago}
  \state{IL}
  \country{USA}
}
\email{asudeh@uic.edu}

\renewcommand{\shortauthors}{Mohsen Dehghankar, Rahul Raychaudhury, Stavros Sintos, and Abolfazl Asudeh}

\begin{abstract}

  The potential harms of algorithmic decisions have ignited algorithmic fairness as a central topic in computer science.
  One of the fundamental problems in computer science is Set Cover, which has numerous applications with societal impacts, such as assembling a small team of individuals that collectively satisfy a range of expertise requirements.
  However, despite its broad application spectrum and significant potential impact, set cover has yet to be studied through the lens of fairness.

  Therefore, in this paper, we introduce \emph{Fair Set Cover}, which aims not only to cover with a minimum-size set but also to satisfy demographic parity in its selection of sets.
  To this end, we develop multiple versions of fair set cover, study their hardness, and devise efficient approximation algorithms for each variant.
  Notably, under certain assumptions, our algorithms always guarantee zero-unfairness, with only a small increase in the approximation ratio compared to regular set cover. 
  Furthermore, our experiments on various data sets and across different settings confirm the negligible price of fairness, as (a) the output size increases only slightly (if any) and (b) the time to compute the output does not significantly increase.
\end{abstract}

\begin{CCSXML}
<ccs2012>
   <concept>
       <concept_id>10003752.10003809.10003636.10003810</concept_id>
       <concept_desc>Theory of computation~Packing and covering problems</concept_desc>
       <concept_significance>500</concept_significance>
       </concept>
 </ccs2012>
\end{CCSXML}

\ccsdesc[500]{Theory of computation~Packing and covering problems}

\keywords{set cover, group fairness, approximation algorithms, selection bias, $k$-color cover}





\maketitle

\vspace{-2mm}
\section{Introduction}
The abundance of data, coupled with AI and advanced algorithms, has revolutionized almost every aspect of human life. 
As data-driven technologies take root in our lives, their drawbacks and potential harms become increasingly evident. Subsequently, algorithmic fairness has become central in computer science research to minimize machine bias.
However, despite substantial focus on fairness in predictive analysis and machine learning~\cite{barocas2017fairness,mehrabi2021survey,pessach2022review}, 
traditional combinatorial optimization has received limited attention~\cite{wang2022balancing}. Nevertheless, combinatorial problems such as ranking~\cite{asudeh2019designing}, matching~\cite{esmaeili2023rawlsian,garcia2020fair}, resource allocation~\cite{donahue2020fairness,blanco2023fairness,jiang2021rawlsian}, and graph problems of influence maximization~\cite{tsang2019group} and edge recommendation~\cite{swift2022maximizing,bashardoust2023reducing} have massive societal impact.  

In this paper, we study the Set Cover problem through the lens of fairness.
As one of Karp's 21 \np-complete problems with various forms, set cover models a large set of real-world problems 
~\cite{vemuganti1998applications,revelle1976applications}, ranging from airline crew scheduling~\cite{rubin1973technique} and facility location (such as placing cell towers) to computational biology~\cite{cho2012chapter} and network security~\cite{ge2010application}, to name only a few.
Given a ground set $\Elements$ of elements and a family of subsets of $\Elements$, called $\Sets$, in the Set Cover problem, the final goal is to cover all the elements of $U$ by picking the least number of sets from $\Sets$.
Beyond its classical application examples, in the real world, set cover is frequently used to model problems that (either directly or indirectly) impact societies and human beings.
In such settings, it is vital to ensure that the set cover optimization (a) does not cause bias and discrimination against some (demographic) groups, and (b) promotes social values such as equity and diversity.
We further motivate the problem with a real-world application in Example~\ref{ex-1} for {\sc Team of Experts Formation}.

\begin{example}\label{ex-1}
    Consider an institution that wants to assemble a team of experts  (e.g., a data science company that wants to hire employees) that collectively satisfy a set of skills (e.g., \{\at{python}, \at{sql}, \at{data-visualization}, \at{statistics}, \at{deep-learning}, $\cdots$\}), i.e., for any skill, there must be at least one team member who has expertise in that skill. Naturally, the institution's goal is to cover all the skills with the smallest possible team. However, due to historical biases in society, solely optimizing for the team size may cause discrimination by mostly selecting team members from privileged groups. Unfortunately, many such {\em biased selections} have happened in the real world.
    For instance, investigating an incident where HP webcams failed to track \at{black} individuals~\cite{hp1}, revealed that HP engineers were mostly \at{white-male} and that caused the issue~\cite{hp2}.
    On the flip side, promoting social values such as diversity are always desirable in problems such as team formation (e.g., the data science company would like to ensure diversity in their hiring).
    In other words, not only do they want to satisfy all skills, but they also want to have an equal (or proportionate) representation of various demographic groups in the assembled team.
\end{example}
\vspace{-1mm}

Motivated by Example~\ref{ex-1} and other social applications of set cover, outlined in \S~\ref{sec:applications}, in this paper we introduce {\fsc}, where the objective is to find a smallest collection of sets to cover a universe of items, while ensuring (weighted) parity in the selection of sets from various groups. More specifically, given a set of groups $\Groups$, a set of elements $\Elements$, a family of sets $\Sets$ such that each set belongs to a group, and a coefficient $f_h$ for each group $\group_h\in \Groups$ with $\sum_{\group_h\in \Groups}f_h=1$, the goal is to select the minimum number of sets from $\Sets$ to cover all elements in $\Elements$, such that, for each group $\group_h\in \Groups$ the fraction of the number of sets from group $\group_h$ in the final solution is $f_h$. 
The formal definitions of our  problems are given in \S~\ref{sec:pre}.

In this paper, we revisit the set cover problem through the lens of group fairness. Our main contributions are as follows.

$\bullet$(\S~\ref{sec:applications}, \S~\ref{sec:pre}) 
    Demonstrating some of its applications, we propose the (unweighted and weighted) fair set cover problem, and study its complexity and hardness of approximation.
    We formulate fairness based on demographic parity in a general form that can be used to enforce count parity, as well as ratio parity, in selection among a set of binary or non-binary demographic groups.
    While various related problems have been studied in the literature (see \S~\ref{sec:related}), to the best of our knowledge, {\em we are the first} to formulate and study the problem of \fsc and its variations (such that the weighted case or the $\eps$-unfairness) over multiple groups.
   
    $\bullet$(\S~\ref{sec:FSC}, \S~\ref{sec:general}) 
    In the unweighted setting, we first propose a naive algorithm that achieves zero-unfairness but at a (large) cost of increasing the approximation ratio by a factor of the number of groups. Next, we propose a greedy algorithm that, while guaranteeing fairness, has the same approximation ratio as of the regular set cover; however, its running time is exponential in the number of groups.
    Therefore, we propose a faster (polynomial) algorithm that still guarantees fairness but slightly increases the approximation ratio by a fixed factor of $\frac{e}{e-1}$. Furthermore, we discuss a relaxed version of fairness constraints (called $\eps$-unfairness) and the trade-offs in the approximation ratios achieved by our algorithms.
   
    $\bullet$ (\S~\ref{sec:WFSC}, \S~\ref{sec:general})
    In the weighted setting, we follow the same sequence by first proposing a naive algorithm that guarantees fairness but at a cost of a major increase in the output size (approximation ratio). We then propose a greedy algorithm that improves the approximation ratio but runs in exponential time relative to the number of groups. Finally, we propose a faster algorithm but at the expense of a minor increase in the approximation ratio.

    $\bullet$ (\S~\ref{sec:experiments})
    To validate our theoretical findings, we conduct comprehensive experiments on real and synthetic datasets, across various settings, including binary vs. non-binary groups and fairness based on equal count vs. ratio parity. In summary, our experiments verify that our algorithms can achieve fairness with a negligible (if any) increase in the output size, and a small increase in the run time.


\section{Application Demonstration}\label{sec:applications}
In Example~\ref{ex-1}, we demonstrated an application of fair set cover for {\sc Team of Experts Formation}.
To demonstrate the application-span of our formulation, in the following, we briefly mention a few more applications.

\paragraph{\sc Equal Base-rate in representative data sets}
An equal base rate is defined as having an equal number of objects for different subgroups in the data set~\cite{kleinberg2016inherent}. Satisfying equal base-rate is important for satisfying multiple fairness metrics in down-stream machine learning task.
Consider the schema of a dataset of individual records (e.g., Chicago breast-cancer examination data), with a set of attributes $\mathbf{x}=\{x_1,\cdots,x_\eta\}$, where $\text{dom}(x_i)>1$ is the domain of $x_i$. In order to ensure representativeness in the data, suppose we would like to 
collect a minimal set that contains at least one instance from all possible level-$\ell$ value combinations.
A level-$\ell$ combination is a value-combination containing $\ell$ attributes in $\mathbf{x}$.
Each individual belongs to a demographic group (e.g., \at{White}, \at{Black}). 
In order to reduce potential unfairnesses for various groups (e.g., Chicago breast-cancer detection disparities~\cite{hirschman2007black}), it is required to satisfy equal base-rate in the dataset. This is an instance of \fsc, where each row (individual record) in the data set, covers all level-$\ell$ combinations it matches with.


\vspace{-2mm}
\paragraph{\sc Business license distribution}
Issuing a minimum number of business licences (e.g., cannabis dispensary) while covering the population of a city is a challenging task for the City Office.
Each business location covers regions within a certain (travel time) range.
Besides coverage, {\em equity} is a critical constraint in these settings.
For example, disparity against \at{Black}-owned cannabis dispensaries in Chicago is a well-known issue~\cite{chiagoDispensry}. Formulating and solving this problem as a fair set cover instance will resolve the disparity issues.


In Social Networks a similar application is {\sc influencer selection}, where companies would like to select a small set of ``innfluencers'' to reach out to their target application. Specifically, once an influencer (aka content creator) post a content about a product, their followers view it.
However, the company might also want to send their products to a diverse set of influencers, to prevent visual biases in their advertisement~\cite{adBiasInfluence}. Each influencer has a "cost" which is the amount of money that they should get to advertise a product. The company should minimize the advertising budget so the problem is mapped to the \emph{fair weighted set cover problem}.
\vspace{-2mm}
\paragraph{\sc Fair Clustering}
Clustering is a fundamental problem in Computer Science with many applications in as social network analysis, medical imaging, and anomaly detection~\cite{xu2005survey}. 
We are given a set of $n$ elements, where each element belongs to a group.
We are also given a clustering objective function that measures the clustering error.
The goal is to compute a set of $k$ centers to minimize the clustering error, ensuring that a specific fraction of centers is selected from each group.
Interestingly some of the known clustering problems can be modeled as a \setc instance (for example $k$-center clustering).
While fair clustering has been studied over different clustering objective functions, such as $k$-center, $k$-median, and $k$-means~\cite{hotegni2023approximation, jones2020fair, kleindessner2019fair}, there are still interesting open problems in the area. In the \emph{sum of radii clustering} the goal is to choose a set of $k$ balls (where the centers of the balls are elements in the input set) that cover all elements minimizing the sum of radii of the balls.
This problem is more useful than the $k$-center clustering because it reduces this dissection effect~\cite{hansen1997cluster, monma1989partitioning}.
The standard (unfair) version of this clustering problem has been thoroughly studied in the literature~\cite{doddi2000approximation, charikar2001clustering}.
Known constant approximation fixed parameter tractable algorithms are known for the fair sum of radii clustering~\cite{chen2024parameterized, inamdar2020capacitated}, however their running time depends exponentially on $k$.
The problem can be mapped to an instance of the \fsc problem with $n$ elements and $O(n^2)$ sets: The elements in \fsc are exactly the elements in the clustering problem. For every element $\element$ we construct $O(n)$ balls/sets with center $\element$ and radii all possible distances from $\element$ to any other element. Each set is assigned a weight which is equal to the radius of the ball. It is not hard to show that any solution of the weighted \fsc instance is a valid solution for the fair sum of radii clustering and vice versa. Hence, our new algorithms for the (weighted) \fsc problem can be used to derive an approximation algorithm for the fair sum of radii clustering in $O(\poly(n,k))$ time.
In the technical report~\cite{dehghankar2024fair}, we provide additional applications of the \fsc problem.




\section{Preliminaries}\label{sec:pre}
\mparagraph{Problem definition}
We are given a universe of $n$ elements $\Elements = \{\element_1, \element_2, \ldots, \element_n\}$, a family of $\mu$ sets $\Sets = \{\set_1, \set_2, \ldots, \set_\mu\}$, where $\cup_{i=1}^\mu\set_i=\Elements$.
We are also given a set of $k$ demographic groups $\Groups=\{\group_1,\ldots, \group_k\}$ (e.g., \{\at{white-male}, \at{black-female},\at{...}\}), while each set $\set_j$ in the family $\Sets$ belongs to a specific demographic group $\group(\set_j)\in \Groups$. In the rest of this paper, we interchangeably refer to the group of a set as its {\em color}. 
Let $\Sets_h\subseteq \Sets$ be the family of sets from $\Sets$ with color $\group_h\in \Groups$. Let $m_h$ be the number of sets of color $\group_h$ in $\Sets$, i.e., $m_h=|\Sets_h|$. 
A \emph{cover} $\cover\subseteq \Sets$ is a subset of $\Sets$ such that for any $\element\in \Elements$ there exists a set $\set\in \cover$ with $\element\in \set$. 
In other words, $X$ is a cover if $\bigcup_{s_i\in X}= \Elements$.
To simplify the notation, for a family of sets $\cover\subseteq \Sets$, we define $\cover\cap \Elements = (\bigcup_{\set\in\cover}\set)\cap \Elements$.

\vspace{-2.5mm}
\paragraph{Fairness definition} Following Example~\ref{ex-1} and the applications demonstrated in \S~\ref{sec:applications}, we adopt the group-fairness notion of {\em demographic parity}, as the (weighted) parity between the number of sets selected from each group. 
Specifically, given the non-negative coefficients $\{f_1,\cdots,f_k\}$ such that $\sum\limits_{h=1}^k f_h=1$, a cover $\cover$ is called a \emph{fair cover} if, for all groups $\gee_h\in \Gee$:
$
\big|\cover \cap \Sets_h\big| = f_h\, \big|\cover \big|$.

Two special cases based on this definition are the following:
\begin{itemize}[leftmargin=*]
    \item {\em Count-parity}: when $f_h=\frac{1}{k}$, $\forall \gee_h\in \Gee$, a fair set cover contains equal number from each group. 
    \item {\em Ratio-parity}: when $f_h=\frac{m_h}{\mu}$, $\forall \gee_h\in \Gee$, a fair set cover maintains the original ratios of groups in $\Sets$. 
\end{itemize}
Note that our fairness definition is not limited to the above cases, and includes any fractions (represented as rational numbers) specified by non-negative coefficients $f_h$ such that $\sum_{h=1}^k f_h=1$.




Next, we define the main problem studied in this paper.
Let $\weight: \Sets\rightarrow \Re^+$ be a weighted function over the sets in $\Sets$. By slightly abusing the notation, for a family of sets $C\subseteq \Sets$, we define $\weight(C)=\sum_{\set\in C}\weight(\set)$. The goal is to find a fair cover such that the selected sets in the cover has the smallest sum of weights.
\vspace{-1mm}
\begin{definition}[Generalized Fair Weighted Set Cover problem (\GfairWprob{})]
\label{prob:two}
    Given a universe of $n$ elements $\Elements$, a set of $k$ groups $\Groups=\{\group_1,\ldots, \group_k\}$, a fraction $\fractionGroup_h$ for each $\group_h\in \Groups$ such that their sum is equal to $1$, a family of $\mu$ subsets $\Sets$, such that each $\set\in \Sets$ is associated with a group $\group(\set)\in \Groups$, and a weighted function $\weight:\Sets\rightarrow \Re^+$ the goal is to compute a fair cover $\Wcover$ such that $\sum_{\set\in\Wcover}\weight(\set)$ is minimized.
\end{definition}
\vspace{-2mm}
If all weights are equal to $1$, i.e., $\weight(S)=1$, for every set $S\in \Sets$, then we call it the Generalized Fair Set Cover problem (\Gfairprob{}).
In the next sections, we denote the optimum fair cover with $\optCover$.

\vspace{2mm}
Under the {\em count-parity} fairness model, 
we call our problems, the Fair Set Cover (\fairprob{}) problem, and the Fair Weighted Set Cover (\fairWprob{}) problem, respectively.
Even though the \Gfairprob{} and \GfairWprob{} problems are more general than \fairprob{} and \fairWprob{}, respectively, the algorithmic techniques to solve \fairprob{} and \fairWprob{} can straightforwardly be extended to handle \Gfairprob{} and \GfairWprob{} with (almost) the same guarantees. In order to simplify the analysis, in the next sections we focus on \fairprob{} and \fairWprob{}. Then in \S~\ref{sec:general}, we describe how our algorithms can be extended to handle \Gfairprob{} and \GfairWprob{}.

\mparagraph{Hardness and Assumption} It is straightforward to show that \fairprob{} is \textsf{NP}-Complete. 
Recently, Barnab{\`o} et al.~\cite{barnabo2019algorithms} showed that \fairprob{} cannot be approximated with any sublinear approximation factor in polynomial time unless $\mathsf{P} = \mathsf{NP}$.
However, if there are \emph{enough} sets from each group, then in the next sections we design efficient algorithms with provable approximation guarantees.
More formally, for the \fairprob{} problem, we assume that every group contains at least $\Omega(\log n)\frac{|\optCover|}{k}$ sets.
For simplicity, in the next sections we also assume that every group has the same number of sets, i.e., for every group $\group_j$ it holds that $m_j=m$. Hence, $\mu=mk$. This only helps us to analyze our algorithms with respect to only three variables, $m, k, n$. All our algorithms can be extended to more general cases (where each color does not contain the same number of sets). We further highlight it in \S~\ref{sec:experiments}, where we run experiments on datasets with $m_i\neq m_j$ for $i\neq j$.


\mparagraph{$\eps$-unfairness} In some applications, the definition of fair cover might be too restrictive. Recall, that a cover $X$ is fair if there are exactly $f_h|X|$ elements from color $\group_h$ in cover $X$ (zero-unfairness).
Instead, we define the notion of \emph{$\eps$-unfair cover}. Given a parameter $\eps\in(0,1)$, a cover $X$ is called an $\eps$-unfair cover if, for each group $\group_h\in \Groups$, it holds that $(1-\eps)f_h|X|\leq |X\cap \Sets_h|\leq (1+\eps)f_h|X|$. In other words, instead of requesting exactly $f_h|X|$ elements from each group $\group_h$, we request any number of elements in the range $[(1-\eps)f_h|X|, (1+\eps)f_h|X|]$. Notice that any fair cover is also an $\eps$-unfair cover. All our defined problems can be studied in this setting. For example, we consider the \emph{$\eps$-\Gfairprob{} problem} that has the same objective with \Gfairprob{} problem, but the returned cover should be an $\eps$-unfair cover instead of a fair cover.
In \S~\ref{sec:epsilon}, we show how our new approximation algorithms for the \Gfairprob{} problem can be used to approximate the $\eps$-\Gfairprob{} problem.
\section{Fair Set Cover}\label{sec:FSC}

\subsection{Naive Algorithm}
We first describe a naive algorithm for the \fairprob{} problem with large approximation ratio. 
We execute the well-known greedy algorithm for the standard set cover problem (its pseudocode is shown in~the technical report~\cite{dehghankar2024fair}) and let $\cover$ be the set cover we get.
If $\cover$ does not satisfy the fairness requirements, we arbitrarily add the minimal number of sets from each color to equalize the number of sets from each color in our final cover. The pseudocode is shown in the technical report~\cite{dehghankar2024fair}.

\vspace{-1mm}
\begin{lemma}
    \label{thm:naiveAlg}
    There exists a $\big(k(\ln n +1)\big)$-approximation algorithm for the \fairprob{} problem that runs in $O(m\cdot k \cdot n)$ time. 
\end{lemma}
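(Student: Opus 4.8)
The plan is to analyze the two-phase algorithm: (1) run the classical greedy set-cover algorithm to obtain a cover $\cover$, then (2) pad $\cover$ with arbitrary extra sets so that every color appears exactly $|\optCover|/k$... wait, no — so that every color appears the same number of times. Let me reconsider. The padding brings every color up to the count of the most-represented color in $\cover$. So the final cover $\cover'$ has $k \cdot c_{\max}$ sets, where $c_{\max} = \max_h |\cover \cap \Sets_h|$.

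First I would bound $|\cover|$. The standard greedy analysis gives $|\cover| \le (\ln n + 1)\,|\optCover_{\mathrm{SC}}|$ where $\optCover_{\mathrm{SC}}$ is the optimal (unconstrained) set cover; since the optimal \emph{fair} cover $\optCover$ is in particular a cover, $|\optCover_{\mathrm{SC}}| \le |\optCover|$, so $|\cover| \le (\ln n + 1)|\optCover|$. Next I would observe $c_{\max} \le |\cover|$ trivially, hence the padded cover satisfies $|\cover'| = k \cdot c_{\max} \le k\,|\cover| \le k(\ln n + 1)|\optCover|$. This gives the claimed approximation ratio. I should also check feasibility of the padding step: it requires each color to contain at least $c_{\max}$ sets in $\Sets$; this is where the standing assumption that every group has $\Omega(\log n)\,|\optCover|/k \ge c_{\max}$ sets (for an appropriate constant, since $c_{\max} \le |\cover| = O(\log n)|\optCover|$) is invoked — I would state this explicitly so the algorithm is well-defined.

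For the running time, I would note the greedy set-cover algorithm runs in $O(\sum_i |\set_i|) = O(mkn)$ time (each of the $\mu = mk$ sets has size at most $n$, and with suitable bookkeeping each element is charged a constant number of times per iteration, or one simply bounds iterations by $n$ and work per iteration by $O(mkn)$ — either way $O(mkn)$ suffices); the padding step is $O(mk)$, dominated by the above. So the total is $O(mkn)$ as claimed.

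The main obstacle — really the only subtlety — is making sure the padding step is both feasible and does not blow up the bound beyond $k(\ln n+1)$; everything hinges on the crude but correct inequality $c_{\max} \le |\cover|$ together with the standing "enough sets per group" assumption. I do not expect to need anything cleverer; the lemma is deliberately a warm-up whose weak ratio is later improved, so the proof should be short.
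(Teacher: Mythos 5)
Your proof is correct and matches the paper's approach: the paper defers the unweighted proof to its technical report, but its appendix proof of the analogous weighted naive bound (Theorem~\ref{thm:NaivefairWCover}) uses exactly your argument — greedy gives $(\ln n+1)|C^*|\le(\ln n+1)|\optCover|$ since any fair cover is a cover, and padding every color up to the most-represented one multiplies the size by at most $k$. Your explicit attention to the feasibility of the padding step under the "enough sets per group" assumption is a point the paper glosses over, but it does not change the argument.
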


\vspace{-3mm}
\subsection{Greedy Algorithm}
\label{subsec:greedy}

\begin{figure}
    \centering
    \includegraphics[width=0.9\linewidth]{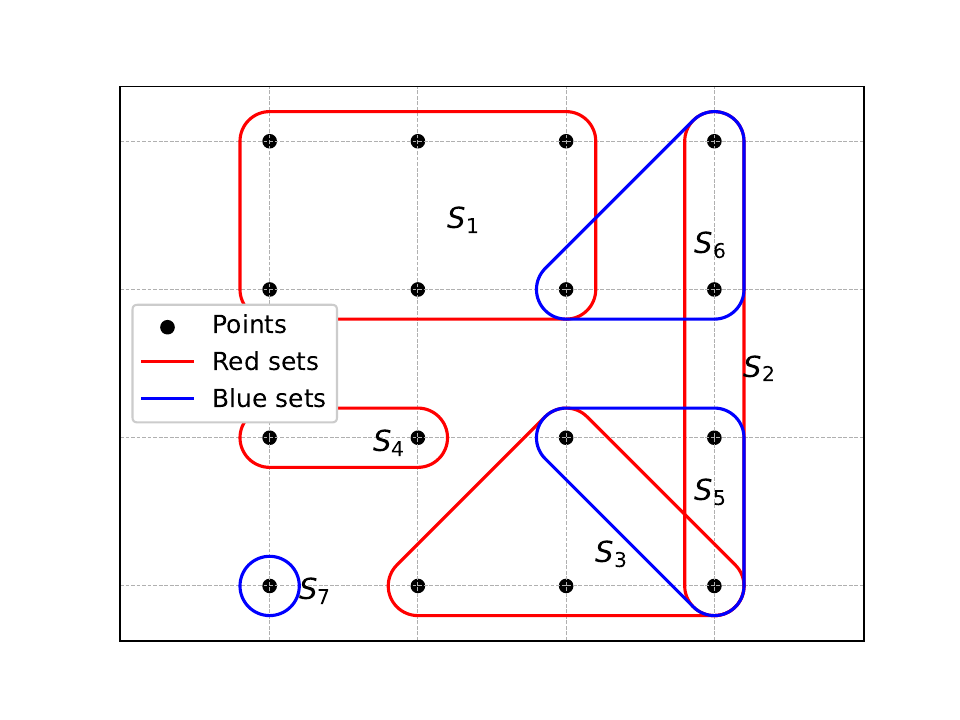}
    \vspace{-8mm}
    \caption{An illustration of a set system for the Fair Set Cover (FSC) problem, consisting of 16 points and 7 sets: 4 red sets and 3 blue sets. The Greedy Fair Set Cover algorithm (\S~\ref{subsec:greedy}) selects $(S_1, S_5)$ as the first pair of red and blue sets, followed by the pairs $(S_3, S_6)$ and $(S_4, S_1)$. The standard greedy algorithm selects the sets $S_1$, $S_2$, $S_3$, $S_4$, and $S_7$, resulting in 4 red sets and 1 blue set, highlighting its lack of fairness.}
    \label{fig:enter-label}
    \vspace{-3mm}
\end{figure}

In this subsection we propose an algorithm for the \fairprob{} problem whose approximation ratio {\em does not depend} on $k$.
For simplicity of explanation, we describe our algorithm for the case we have only two demographic groups, i.e., $\Groups=\{\group_1, \group_2\}$. In the end, our algorithm can be extended almost verbatim to $k>2$ groups.

Let $\cover=\emptyset$ be the cover we construct and let $\uncovered=\Elements$ be the set of uncovered elements.
We repeat the following steps until $\uncovered=\emptyset$.
In each iteration, we find the \emph{pair} of sets $(\set_A, \set_B)$ such that i) sets $\set_A, \set_B$ cover the maximum number of uncovered elements, i.e., $\uncovered\cap (\set_A \cup \set_B)$ is maximized, and ii) set $\set_A$ has color $\group_1$ and $\set_B$ has color $\group_2$, i.e., $\group(\set_A)=\group_1$ and $\group(\set_B)=\group_2$.
Once we find $\set_A, \set_B$ we update the set $\uncovered$ removing the new covered element, $\uncovered\gets \uncovered\setminus(\set_A\cup \set_B)$, and we add $\set_A, \set_B$ in $\cover$. In the end, after covering all elements ($\uncovered=\emptyset$), we return $\cover$. The pseudocode of our algorithm is shown in the technical report~\cite{dehghankar2024fair}.

\vspace{-1mm}
\paragraph{Analysis}
First, it is straightforward to see that $\cover$ is a fair cover. The algorithm stops when there is no uncovered element so $\cover$ is a cover. In every iteration, we add exactly one set of color $\group_1$ and one set of color $\group_2$, so $\cover$ is a fair cover.
\vspace{-1mm}
\begin{lemma}
    \label{lem:fairapprox}
    $|\cover|\leq (\ln n + 1)\cdot |\optCover|$.
\end{lemma}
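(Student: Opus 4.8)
The plan is to mimic the classical greedy set cover analysis, but applied to \emph{pairs} of sets. The key observation is that the optimal fair cover $\optCover$ has an even size (say $2t$, with $t$ red and $t$ blue sets, since it is fair with two groups), and it can be partitioned arbitrarily into $t$ red-blue pairs $(A_1,B_1),\dots,(A_t,B_t)$. At any point during our algorithm, if $\uncovered$ is the current set of uncovered elements, then these $t$ pairs from $\optCover$ together cover all of $\uncovered$, so by averaging there is at least one pair $(A_i,B_i)$ covering at least $|\uncovered|/t$ uncovered elements. Since our algorithm picks the pair maximizing the number of newly covered uncovered elements, the pair it picks covers at least $|\uncovered|/t \ge |\uncovered|/|\optCover| \cdot 2 \ge |\uncovered|\cdot 2/|\optCover|$ uncovered elements --- but to match the $(\ln n + 1)$ bound cleanly I would just use $|\uncovered|/t$ where $t = |\optCover|/2$, i.e., the picked pair covers at least a $1/t$ fraction of what remains.

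Concretely, I would set up the standard potential/recurrence argument. Let $n_j$ denote the number of uncovered elements after $j$ iterations (so $n_0 = n$), and let $t = |\optCover|/2$. From the averaging argument, iteration $j+1$ covers at least $n_j/t$ of the remaining elements, giving $n_{j+1} \le n_j(1 - 1/t)$. Hence after $\ell$ iterations $n_\ell \le n(1-1/t)^\ell \le n e^{-\ell/t}$. The algorithm terminates once $n_\ell < 1$, i.e., once $n_\ell = 0$; this happens as soon as $n e^{-\ell/t} < 1$, i.e., $\ell > t \ln n$. So the algorithm runs for at most $\lceil t \ln n \rceil$ iterations where it could still have uncovered elements, plus possibly one final iteration. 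A slightly more careful bookkeeping (the standard one) shows the number of iterations is at most $t\ln n + 1$ --- or one can argue that after $t \ln(n/t)$ iterations at most $t$ elements remain, and then at most $t$ more iterations finish the job, but the cleaner route is the $\lceil t\ln n\rceil$-type bound; either way, since each iteration adds exactly $2$ sets to $\cover$, we get $|\cover| = 2\cdot(\#\text{iterations}) \le 2(t\ln n + 1) = |\optCover|\ln n + 2$. To get the stated $(\ln n + 1)|\optCover|$ I would be slightly more careful with the rounding: the number of iterations is at most $\lceil t \ln n \rceil \le \lceil (\ln n)|\optCover|/2 \rceil$, but the bound $|\cover| \le (\ln n + 1)|\optCover|$ follows because $2\lceil t\ln n\rceil \le 2t\ln n + 2 = |\optCover|\ln n + 2 \le |\optCover|(\ln n + 1)$ as long as $|\optCover| \ge 2$, which holds since a fair cover with two nonempty groups has size at least $2$.

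The main subtlety --- not really an obstacle, but the point requiring care --- is the interaction between the rounding in the number of iterations and the factor of $2$ per iteration, and making sure the averaging step is valid: one must check that the $t$ pairs drawn from $\optCover$ genuinely cover $\uncovered$ (they cover all of $\Elements \supseteq \uncovered$) and that pairing red with blue sets inside $\optCover$ is always possible, which is exactly the fairness of $\optCover$ (equal numbers of red and blue). A second point worth noting is that our algorithm's chosen pair is required to have one red and one blue set, so it is competing against exactly the kind of pair we extracted from $\optCover$; the maximization is over the same class of objects, so the ``greedy covers at least as much as the best pair in $\optCover$'' step is immediate. Finally I would remark that the extension to $k > 2$ groups replaces pairs by $k$-tuples (one set per color), $\optCover$ is partitioned into $|\optCover|/k$ such tuples, and the identical recurrence with $t = |\optCover|/k$ and a factor $k$ per iteration again yields $|\cover| \le (\ln n + 1)|\optCover|$.
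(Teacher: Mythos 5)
Your proof is correct in substance and lands on the same underlying idea as the paper's, but packages it differently. The paper constructs an auxiliary standard set cover instance $(\Elements', \Sets')$ whose ``sets'' are all red--blue unions $\set_i\cup\set_j$, observes that its optimum $C^*$ satisfies $|C^*|\le \frac12|\optCover|$, shows the fair greedy simulates the standard greedy on that instance, and then black-boxes the known $(\ln n+1)$ guarantee. You instead re-derive that guarantee from first principles: partition $\optCover$ into $t=|\optCover|/2$ red--blue pairs, average to get a pair covering at least $|\uncovered|/t$ of what remains, and run the recurrence $n_{j+1}\le n_j(1-1/t)$. This buys a self-contained argument (no appeal to the classical theorem) at the cost of the rounding bookkeeping you correctly worry about; note that your preferred $\lceil t\ln n\rceil$ route only gives $n_\ell\le 1$ rather than $n_\ell=0$ and so is off by one iteration, whereas the alternative you mention in passing ($t\ln(n/t)$ iterations to reach $t$ uncovered elements, then $t$ more) cleanly yields at most $t(\ln n+1)$ iterations and hence exactly the claimed bound.

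The one point you wave at too quickly is the claim that ``the maximization is over the same class of objects, so the step is immediate.'' It is not quite: the greedy maximizes over pairs of \emph{not-yet-selected} sets, while a pair $(A_i,B_i)$ extracted from $\optCover$ may contain a set the algorithm has already chosen. The fix is exactly the case analysis the paper spends a paragraph on: if, say, $B_i$ was already selected, then $B_i\cap\uncovered=\emptyset$, so $A_i$ alone accounts for the pair's $|\uncovered|/t$ new elements, and pairing $A_i$ with \emph{any} fresh blue set does at least as well --- such a fresh set exists precisely because of the standing assumption that each group contains $\Omega(\log n)\frac{|\optCover|}{k}$ sets, which exceeds the number of iterations. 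Your proof implicitly relies on this assumption without invoking it; with that repair inserted, the argument is complete.
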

\vspace{-3mm}
\begin{proof}
For each pair $\set_A, \set_B\in \Sets$, let $\set_{A,B}=\set_A\cup \set_B$.
We consider the following instance of the set cover problem.
We define the set $\Elements'=\Elements$ and $\Sets'=\{\set_{i,j}\mid \set_i\in \Sets_1, \set_j\in \Sets_2\}$. Let $C^*$ be the optimum solution of the set cover instance $(\Elements', \Sets')$. Any fair cover returned by our algorithm can be straightforwardly mapped to a valid set cover for $(\Elements', \Sets')$.
Hence, by definition, we have $|C^*|\leq \frac{1}{2} |\optCover|$.

Recall that the standard greedy algorithm for the set cover problem returns a $(\ln n +1)$-approximation.
We show that our algorithm implements such a greedy algorithm and returns a $(\ln n +1)$-approximation in the set cover instance $(\Elements', \Sets')$.
At the beginning of an iteration $i$, without loss of generality, assume that our algorithm has selected the pairs $(\set_{A_1}, \set_{B_1}),\ldots, (\set_{A_{i-1}}, \set_{B_{i-1}})$, so the sets 
$\set_{A_1,B_1}, \ldots, \set_{A_{i-1}, B_{i-1}}$ have been selected for the set cover instance $(\Elements', \Sets')$. Let $\uncovered$ be the set of uncovered elements in $\Elements'$ at the beginning of the $i$-th iteration.
Let $\set_{A_j,B_j}$ be the set in $\Sets'$ that covers the maximum number of elements in $\uncovered$. We consider two cases.

If $\set_{A_j}\notin \{\set_{A_1},\ldots, \set_{A_{i-1}}\}$ and $\set_{B_j}\notin \{\set_{B_1},\ldots, \set_{B_{i-1}}\}$, then by definition our greedy algorithm selects the pair $(\set_{A_j}, \set_{B_j})$ so the set $\set_{A_j,B_j}$ is added in the cover of the set cover instance $(\Elements', \Sets')$.

Next, consider the case where $\set_{A_j}\notin \{\set_{A_1},\ldots, \set_{A_{i-1}}\}$ and $\set_{B_j}\in \{\set_{B_1},\ldots, \set_{B_{i-1}}\}$ (the proof is equivalent if the set $S_{A_j}$ has been selected before). Since there are enough sets from each group, let $S_{B_h}$ be a set with color $\group_2$ such that $\set_{B_h}\notin \{\set_{B_1},\ldots, \set_{B_{i-1}}\}$. By definition, we have $|\set_{A_j,B_h}\cap \uncovered|\geq |\set_{A_j,B_j}\cap \uncovered|$. Our greedy algorithm selects the pair $\set_{A^*}, \set_{B^*}$ such that $|(\set_{A^*}\cup \set_{B^*})\cap \uncovered|$ is maximized, and $\set_{A^*}\notin \{\set_{A_1},\ldots, \set_{A_{i-1}}\}$, $\set_{B^*}\notin \{\set_{B_1},\ldots, \set_{B_{i-1}}\}$. Hence, $|\set_{A^*,B^*}\cap \uncovered|=|(\set_{A^*}\cup \set_{B^*})\cap \uncovered|\geq |(\set_{A_j}\cup \set_{B_h})\cap \uncovered|\geq |(\set_{A_j}\cup \set_{B_j})\cap \uncovered|=|\set_{A_j,B_j}\cap \uncovered|$.

In every iteration, our greedy algorithm computes a set $\set_{A,B}$ that covers the most uncovered elements in $\Elements'$.
Let $C$ be the family of sets returned by the greedy algorithm executed on $(\Elements', \Sets')$.
Every selected set in $C$ is a pair of sets in $\Sets$.
Hence,
$|\cover|\leq 2(\ln n +1)|C^*|$.
Furthermore,
$|C^*|\leq \frac{1}{2} |\optCover|$ as shown in the first paragraph of the proof. We conclude that $|\cover|\leq (\ln n + 1)|\optCover|$.
\end{proof}

Notice that there exists $O(m_1m_2)=O(m^2)$  pairs of sets with different colors. At the beginning of the algorithm, for every pair of sets we compute the number of elements they cover in $O(m^2n)$ time. Each time we cover a new element we update the counters in $O(m^2)$ time. In total, our algorithm runs in $O(m^2n)$ time.
This algorithm can be extended to $k$ groups, straightforwardly. In each iteration, we find the $k$ sets of different color that cover the the most uncovered elements. The running time increases to $O(m^kn)$.

\vspace{-2mm}
\begin{theorem}
    \label{thm:fairCover}
    There exists an $(\ln n +1)$-approximation algorithm for the \fairprob{} problem that runs in $O(m^kn)$ time.
\end{theorem}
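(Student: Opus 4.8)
The plan is to derive Theorem~\ref{thm:fairCover} by lifting the algorithm and analysis of Lemma~\ref{lem:fairapprox} from two colors to $k$ colors. The algorithm maintains a partial cover $\cover$ and the uncovered set $\uncovered=\Elements$; in each round it selects a $k$-tuple $(\set_{i_1},\dots,\set_{i_k})$ with $\group(\set_{i_h})=\group_h$ that maximizes $|\uncovered\cap(\set_{i_1}\cup\dots\cup\set_{i_k})|$ among all tuples whose sets were not chosen in earlier rounds, adds all $k$ sets to $\cover$, and removes the newly covered elements from $\uncovered$. Since every round inserts exactly one set of each color, the returned $\cover$ is automatically a fair cover, and the algorithm terminates because each round covers at least one new element. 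So only the approximation bound and running time require work.

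For the approximation bound I would mirror the $k=2$ reduction: build the auxiliary set-cover instance $(\Elements',\Sets')$ with $\Elements'=\Elements$ and $\Sets'=\{\set_{i_1}\cup\dots\cup\set_{i_k}\mid \set_{i_h}\in\Sets_h\ \forall h\}$, one ``super-set'' per admissible color-tuple. Because $\optCover$ is a fair (count-parity) cover it contains exactly $|\optCover|/k$ sets of each color, which can be grouped arbitrarily into $|\optCover|/k$ tuples; the corresponding super-sets cover $\Elements$, so the optimum super-cover $C^*$ satisfies $|C^*|\le |\optCover|/k$. Conversely, the standard greedy set-cover algorithm on $(\Elements',\Sets')$ returns a super-cover of size at most $(\ln n+1)|C^*|$, which unpacks into at most $k(\ln n+1)|C^*|\le(\ln n+1)|\optCover|$ original sets. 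It therefore suffices to show that our constrained algorithm covers, in each round, at least as many new elements as greedy on $(\Elements',\Sets')$ does, so that the usual harmonic-sum analysis of greedy set cover transfers verbatim.

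The key step, and the main obstacle, is exactly the one resolved for $k=2$: the unconstrained greedy maximizer over $\Sets'$ may pick a tuple reusing a set already placed in $\cover$, whereas our algorithm is restricted to fresh tuples. Here I would invoke the standing assumption that every group has $\Omega(\log n)\,|\optCover|/k$ sets, which exceeds the number of rounds our algorithm runs (namely $|\cover|/k\le(\ln n+1)|\optCover|/k$, using the bound we are proving as the a priori estimate); hence in any round there is still a fresh set of each color available. If the best tuple $(\set_{i_1},\dots,\set_{i_k})$ for the current $\uncovered$ reuses some already-chosen sets, replace each reused $\set_{i_h}$ by a fresh set $\set'_{i_h}$ of the same color; since substituting a set in a union can only enlarge the covered part of $\uncovered$, the fresh tuple covers at least as many uncovered elements as the best tuple, and our algorithm — maximizing over all fresh tuples — does at least as well. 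Thus each round matches greedy on $(\Elements',\Sets')$, giving $|\cover|\le k(\ln n+1)|C^*|\le(\ln n+1)|\optCover|$.

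Finally, for the running time there are $m^k$ admissible tuples; precomputing each tuple's coverage count costs $O(m^k n)$, and every time an element becomes covered we decrement the counters of all tuples containing it in $O(m^k)$ time, for $O(m^k n)$ total over all $n$ elements; selecting the maximum-coverage fresh tuple in each of the $O((\ln n+1)|\optCover|/k)$ rounds is subsumed by this bound. Hence the algorithm runs in $O(m^k n)$ time, completing the proof. (The same argument yields the stated $O(m^2 n)$ bound for $k=2$ as a special case.)
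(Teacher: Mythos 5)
Your proposal is correct and follows essentially the same route as the paper: the paper proves the two-color case via the same auxiliary instance $(\Elements',\Sets')$ of super-sets, the same swap argument replacing an already-selected set by a fresh one of the same color, and the same counter-based $O(m^k n)$ running-time accounting, then asserts the extension to $k$ colors that you carry out directly. Your explicit handling of the a priori bound on the number of rounds (to guarantee fresh sets remain) is if anything slightly more careful than the paper's appeal to ``enough sets from each group.''
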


While our algorithm returns an $O(\log n)$-approximation, which is the best that someone can hope for the \fairprob{}, the running time depends {\em exponentially} on the number of groups $k$. Next, we propose two faster $O(\poly(n,m,k))$ time randomized algorithms with the same asymptotic approximation.
\subsection{Faster algorithms}
\label{subsec:fasterAlg}

The expensive part of the greedy algorithm is the selection of $k$ sets of different color that cover the most uncovered elements, in every iteration.
We focus on one iteration of the greedy algorithm. 
Intuitively,
we need to find $k$ sets, each from a different color, that maximizes the coverage over the uncovered elements.
We call this problem, \kmc. Formally,


\begin{definition}[\kmc] 
    Consider a set $\Elements^-\subseteq \Elements$, called the set of uncovered elements, 
    a family of sets $\Sets^-\subseteq \Sets$, called the sets not selected so far, and
    the value $k$ showing the number colors (groups) that sets belong to.
    Our goal is to select $k$ sets $X\subseteq \Sets^-$, one from each color, such that $\big| X\cap \Elements^-\big|$ is maximized.
\end{definition}


The \kmc problem is $\NP$-complete. 
The proof is straight-forward using the reduction from the traditional max $k$-cover (MC) problem by creating $k$ copies of each original set, while assigning each copy to a unique color. Now the max-cover problem has a solution with coverage $\ell$, if and only if the $k$-color MC has a solution with coverage $\ell$.
Since \kmc is $\NP$-complete, we propose constant-approximation polynomial algorithms for it.

\paragraph{\textbf{$\frac{1}{2}$- approximation algorithm}}
Inspired by~\cite{ma2021fair}, where the authors also study a version of the \kmc problem (to design efficient algorithms for the \emph{fair regret minimizing set} problem), we propose the following algorithm.
We run the standard greedy algorithm for the max $k$-cover problem. Let $Y=\Elements^-$ be the current set of uncovered elements in $\Elements^-$ and let $\kcover=\emptyset$ be the family of $k$ sets we return. The next steps are repeated until $|\kcover|=k$ or $Y=\emptyset$. In each iteration, the algorithm chooses the set $\set\in\Sets^-$ such that $|\set\cap Y|$ is maximized. If $\kcover$ contains another set with color $\group(\set)$, then we remove $\set$ and we continue with the next iteration. Otherwise, we add $\set$ in $\kcover$ and we update $Y=Y\setminus \set$.
If $Y=\emptyset$ and $\kcover$ does not contain sets from all groups, we add in $\kcover$ exactly one set per missing group.

\vspace{-2mm}
\begin{theorem}
\label{thm:SimplefairMaxKCover}
    There exists an $\frac{1}{2}$-approximation algorithm for the \kmc problem that runs in $O(n\mu)$ time.
\end{theorem}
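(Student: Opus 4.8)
The plan is to recognize the proposed procedure as the standard greedy algorithm for maximizing a monotone submodular function subject to a partition matroid constraint, and then re-derive the classical $\frac{1}{2}$ guarantee for that setting. First I would set up the framework: define $f:2^{\Sets^-}\to\Re$ by $f(X)=|X\cap\Elements^-|$, which is monotone and submodular since it is a coverage function restricted to the ground set $\Elements^-$. The feasible solutions of \kmc are exactly the subsets of $\Sets^-$ containing exactly one set of each color; by the ``enough sets'' assumption every color class of $\Sets^-$ is nonempty, so these are precisely the bases of the partition matroid whose independent sets are the color-wise-at-most-one subsets of $\Sets^-$. Next I would argue the algorithm faithfully implements the matroid greedy, i.e.\ at each step it adds, among all sets whose color is not yet used, one of maximum marginal coverage $|\set\cap Y|$: popping the globally best remaining set and discarding it when its color is already taken is equivalent, because a color once used is never freed, so a discarded set can never again be a feasible addition. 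When $Y=\emptyset$ all remaining marginal gains are $0$, so filling in one arbitrary set per missing color yields a basis of the same coverage; in that case $f(\kcover)=|\Elements^-|$ is already optimal and the bound is trivial. (One also uses here that every element of $\Elements^-$ still lies in some set of $\Sets^-$, so the loop makes positive progress while $Y\ne\emptyset$.)

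For the ratio I would give the short exchange argument directly. Write $\kcover=\{\set_1,\dots,\set_k\}$ in the order selected, let $X_t=\{\set_1,\dots,\set_t\}$, and fix an optimal solution $O=\{O_1,\dots,O_k\}$ with $O_i$ of color $i$, relabelling colors so that $\set_i$ also has color $i$. By monotonicity and submodularity, $f(O)\le f(O\cup\kcover)\le f(\kcover)+\sum_{i=1}^{k}\big(f(\kcover\cup\{O_i\})-f(\kcover)\big)$. For each $i$, at the step where greedy chose $\set_i$ the color of $O_i$ was still unused, so $O_i$ was an eligible candidate, giving $f(X_i)-f(X_{i-1})\ge f(X_{i-1}\cup\{O_i\})-f(X_{i-1})\ge f(\kcover\cup\{O_i\})-f(\kcover)$, the last step again by submodularity since $X_{i-1}\subseteq\kcover$. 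Summing and telescoping $\sum_i\big(f(X_i)-f(X_{i-1})\big)=f(\kcover)$ yields $f(O)\le 2f(\kcover)$, i.e.\ a $\frac{1}{2}$-approximation; this is exactly the classical greedy bound under a matroid constraint.

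For the running time I would maintain a counter $c(\set)=|\set\cap Y|$ for every $\set\in\Sets^-$, initialized by one scan of all sets in $O\!\big(\sum_{\set}|\set|\big)=O(n\mu)$ time, together with, for each element, the list of sets containing it. Using a bucket priority queue keyed by $c(\cdot)$ (whose values lie in $\{0,\dots,n\}$), each of the at most $\mu$ extractions costs $O(1)$ amortized, and whenever an element of $Y$ becomes covered we decrement the counters of the sets containing it, for a total of $O\!\big(\sum_{\set}|\set|\big)=O(n\mu)$ decrease-key operations. Hence the whole algorithm runs in $O(n\mu)$ time.

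The hard part will not be any single computation but making the two reductions airtight: checking that the ``discard if color used'' loop genuinely coincides with the matroid greedy (so that the $O_i$ really was a candidate at step $i$), and handling the early-termination/fill-in branch so that the returned $\kcover$ is a true feasible solution (exactly one set per color) without damaging either its coverage value or the $\frac{1}{2}$ bound. Both become routine once stated carefully, but they are precisely the places where the informal description could hide an off-by-one or an infeasible output.
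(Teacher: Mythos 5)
Your proof is correct and is essentially the paper's own argument: the core step --- that the greedy gain at step $i$ dominates the marginal contribution of the optimal set $O_i$ of the same color because that color is still unused, followed by the telescoping/exchange bound giving $f(O)\le 2f(\kcover)$ --- is exactly the chain of inequalities in the paper's proof, just phrased in the language of monotone submodular maximization over a partition matroid. Your extra care about the discard step, the early-termination fill-in, and the bucket-queue implementation of the $O(n\mu)$ bound are all consistent with (and slightly more explicit than) what the paper does.
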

\vspace{-1em}
\begin{proof}
    By definition, $\kcover$ contains one set from each group in $\Groups$, so it is a valid $k$-color cover.
Let $\set_h$ be the $h$-th set added in $\kcover$.
    Without loss of generality, assume $\group(\set_h)=\group_h$, for $h=1,\ldots, k$. Let $O^*=\{O_1,\ldots, O_k\}\subseteq \Sets^-$ be the optimum solution for the \kmc problem. Without loss of generality, assume that $\group(O_h)=\group_h$, for $h=1,\ldots, k$. We have,
    \vspace{-1.5mm}
    \begin{align*}
        |\kcover|&=\sum_{i=1}^k\left|\set_i\setminus (\cup_{h=1..i-1}\set_h\right)|\geq \sum_{i=1}^k\left|O_i\setminus \cup_{h=1}^{i-1}\set_h\right |\geq \sum_{i=1}^k\left|O_i\setminus \kcover\right |\\
        &\geq |O^*|-|\kcover|.
    \end{align*}
    Hence $|\kcover|\geq \frac{1}{2}|O^*|$. Each time we cover a new element, we update the number of uncovered elements in each set in $O(\mu)$ time. Hence, the algorithm runs in $O(n\mu)$ time.
\end{proof}
\paragraph{\textbf{$(1-\frac{1}{e})$-approximation algorithm}}
We design a randomized \textsf{LP}-based algorithm for the \kmc problem with better approximation ratio.
For every element $\element_j\in \Elements^-$ we define the variable $y_j$. For every set $\set_i\in \Sets^-$ we define the variable $x_i$. Let $\Sets_h^-=\{\set\in\Sets^-\mid \set\in\Sets_h\}$.
The next Integer Program represents an instance of the \kmc problem. 
\begin{align*}
\label{fast_lp_1}
\max \sum_j y_j&\\
\sum_{i: \set_i\in\Sets^-_h}x_i&=1,\quad \forall \group_h\in \Groups\\
\sum_{i: \element_j\in \set_i} x_i&\geq y_j, \quad \forall \element_j\in \Elements^-\\
x_i\in \{0,1\}, \forall \set_i\in\Sets^-,
\quad&y_j\in \{0,1\},\forall \element_j\in \Elements^-
\end{align*}
We then replace the integer variables to continuous ($x_i\in [0,1]$ and $y_j\in [0,1]$) and use a polynomial time \textsf{LP} solver to compute the solution of the \textsf{LP} relaxation.
Let $x_i^*$, $y_j^*$ be the values of the variables in the optimum solution.
For every group $\group_h\in \Groups$, we sample exactly one set from $\Sets^-_h$, using the probabilities $\{x_i^*\mid \set_i\in \Sets^-_h\}$.
Let $\kcover$ be the family of $k$ sets we return.

\paragraph{\textbf{Analysis}}

By definition $\kcover$ contains exactly one set from every group $\group_i$. Next, we show the approximation factor.
\begin{lemma}
\label{lem:lem1}
An element $\element_j$ is covered by a set in $\kcover$ with probability at least $(1-\frac{1}{e})y_j^*$.
\end{lemma}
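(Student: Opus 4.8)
The plan is a standard randomized-rounding analysis of the LP solution. Fix an element $\element_j\in\Elements^-$, and for each group $\group_h\in\Groups$ set $q_h=\sum_{i:\,\element_j\in\set_i,\ \set_i\in\Sets^-_h}x_i^*$, the probability that the single set drawn from $\Sets^-_h$ contains $\element_j$; this is a genuine probability because the LP enforces $\sum_{i:\set_i\in\Sets^-_h}x_i^*=1$. Since the $k$ draws (one per group) are mutually independent, the probability that $\element_j$ is left uncovered by $\kcover$ is exactly $\prod_{h=1}^{k}(1-q_h)$.

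Next I would upper bound this product. By the AM--GM inequality, $\prod_{h=1}^{k}(1-q_h)\le\big(\tfrac1k\sum_{h=1}^{k}(1-q_h)\big)^{k}=\big(1-\tfrac1k\sum_{h=1}^{k}q_h\big)^{k}$. Summing the LP covering constraint $\sum_{i:\element_j\in\set_i}x_i^*\ge y_j^*$ over the groups gives $\sum_{h=1}^{k}q_h\ge y_j^*$, so $\prod_{h=1}^{k}(1-q_h)\le(1-y_j^*/k)^{k}\le e^{-y_j^*}$. Hence $\element_j$ is covered by some set of $\kcover$ with probability at least $1-e^{-y_j^*}$.

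It then remains to convert $1-e^{-y_j^*}$ into the stated bound $(1-\tfrac1e)y_j^*$. Because the LP relaxation allows $y_j\in[0,1]$, we have $y_j^*\in[0,1]$; the function $y\mapsto 1-e^{-y}$ is concave on $[0,1]$ with value $0$ at $y=0$ and value $1-\tfrac1e$ at $y=1$, so concavity yields $1-e^{-y_j^*}\ge(1-\tfrac1e)\,y_j^*$ for every $y_j^*\in[0,1]$, which finishes the proof.

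The argument is essentially routine; the only points that need care are verifying that $q_h$ is a valid probability (this is exactly the group equality constraint of the LP) together with the independence of the per-group draws, and the final concavity step, which genuinely relies on $y_j^*\le 1$ — the place where having relaxed $y_j$ to $[0,1]$ rather than $[0,\infty)$ matters.
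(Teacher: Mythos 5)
Your proof is correct and follows essentially the same route as the paper's: bound the probability of non-coverage by the product of per-group failure probabilities, reduce it to $e^{-y_j^*}$, and finish with the concavity bound $1-e^{-y}\geq(1-\tfrac{1}{e})y$ on $[0,1]$. The only (cosmetic) difference is that you pass through AM--GM to reach $\left(1-y_j^*/k\right)^k\leq e^{-y_j^*}$, whereas the paper applies $1-x\leq e^{-x}$ directly to each factor of the product.
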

\begin{proof}
By definition, $\element_j$ is not covered by a set in $\Sets^-_h$ with probability

\vspace{-6mm}
$$1-\sum_{i: \element_j\in \set_i, \set_i\in\Sets^-_h}x_i^*\leq e^{-\sum_{i: \element_j\in \set_i, \set_i\in\Sets^-_h}x_i^*}.$$

\vspace{-1mm}
The inequality holds because of the well know inequality $1+x\leq e^x$ for every $x\in \Re$.
Hence,
\begin{align*}
Pr[\element_j \notin \kcover]&=\prod_{\group_h\in\Groups}(1-\sum_{i: \element_j\in \set_i, \set_i\in\Sets^-_h}x_i^*)
\leq \prod_{\group_h\in\Groups}e^{-\sum_{i: \element_j\in \set_i, \set_i\in\Sets^-_h}x_i^*}
\\&=e^{-\sum_{\group_h\in\Groups}\sum_{i: \element_j\in \set_i, \set_i\in\Sets^-_h}x_i^*}
=e^{-\sum_{i: \element_j\in \set_i}x_i^*} 
\leq e^{-y_j^*}.
\end{align*}
We conclude that
$Pr[\element_j \in \kcover]\geq 1-e^{-y_j^*}\geq (1-\frac{1}{e})y_j^*$.
\end{proof}


\vspace{-1mm}
\begin{lemma}
\label{lem:lem2}
The expected number of elements covered by sets in $\kcover$ is at least $(1-\frac{1}{e})\mathsf{OPT}$, where $\mathsf{OPT}$
is the number of covered elements in the optimum solution for the \kmc problem.
\end{lemma}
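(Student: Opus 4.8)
The plan is to pass from the per-element guarantee of Lemma~\ref{lem:lem1} to the aggregate bound by linearity of expectation, and then to argue that the LP optimum $\sum_j y_j^*$ is itself an upper bound on $\mathsf{OPT}$. First I would write $Z = |\kcover \cap \Elements^-|$ for the (random) number of previously uncovered elements that the rounded solution covers, and observe that $Z = \sum_{\element_j \in \Elements^-} \mathbf{1}[\element_j \in \kcover]$, so that $\ee[Z] = \sum_{\element_j \in \Elements^-} \Pr[\element_j \in \kcover]$. Applying Lemma~\ref{lem:lem1} term by term then gives $\ee[Z] \geq (1-\frac{1}{e})\sum_{\element_j\in\Elements^-} y_j^*$.

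The second step is to show $\sum_{\element_j\in\Elements^-} y_j^* \geq \mathsf{OPT}$. For this I would check that the integer program stated above is an exact formulation of \kmc: in any feasible integral solution, the sets $\{\set_i : x_i = 1\}$ form a valid $k$-color cover (exactly one per color, by the equality constraints), and for each element $\element_j$ the constraint $\sum_{i:\element_j\in\set_i} x_i \geq y_j$ together with $y_j \in \{0,1\}$ allows $y_j = 1$ only when $\element_j$ is covered; conversely, starting from the optimal $k$-color cover $O^*$ one sets $x_i = 1$ on its sets and $y_j = 1$ exactly on the covered elements, achieving objective value $\mathsf{OPT}$. Hence the IP optimum equals $\mathsf{OPT}$, and since the LP is obtained by relaxing only the integrality constraints of a maximization problem, its optimum value $\sum_{\element_j\in\Elements^-} y_j^*$ is at least $\mathsf{OPT}$. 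Combining the two steps yields $\ee[Z] \geq (1-\frac{1}{e})\mathsf{OPT}$, as claimed.

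I do not expect a genuine obstacle here; the only point that needs a line of care is the claim that the IP models \kmc exactly — in particular that an optimal integral solution can be taken with $y_j$ equal to the indicator of $\element_j$ being covered — after which the statement follows from the standard LP-relaxation bound together with the linearity-of-expectation argument above.
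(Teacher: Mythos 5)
Your proposal is correct and follows essentially the same route as the paper: linearity of expectation combined with the per-element bound of Lemma~\ref{lem:lem1}, then the standard observation that the LP relaxation value $\sum_j y_j^*$ upper-bounds $\mathsf{OPT}$. The only difference is that you spell out why the integer program is an exact formulation of \kmc, which the paper leaves implicit.
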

\begin{proof}
Let $\mathsf{OPT}_{LP}=\sum_j y_j^*$ be the optimum solution of the LP above.
Let $Z_j$ be a random variable which is $1$ if $\element_j$ is covered in $\kcover$, and $0$ otherwise.
We have
\begin{align*}
    E\Big[\sum_{j} Z_j\Big]&=\sum_j E[Z_j]
    =\sum_j Pr[Z_j=1]
    =\sum_j Pr[\element_j \in\kcover]\\&
    \geq \sum_j (1-\frac{1}{e})y_j^*
    =(1-\frac{1}{e})\mathsf{OPT}_{LP}
    \geq (1-\frac{1}{e})\mathsf{OPT}.
    \vspace{-2mm}
\end{align*}
\vspace{-1mm}
\end{proof}

Let $\mathcal{L}(a,b)$ be the running time to solve an \textsf{LP} with $a$ constraints and $b$ variables. Our algorithm for the \kmc problem runs in $O(\mathcal{L}(n+k,2n+2\mu))$ time.

Putting everything together, we conclude with the next theorem.
\begin{theorem}
    \label{thm:fairMaxKCover}
    There exists a randomized $(1-\frac{1}{e})$-approximation algorithm for the \kmc problem that runs in $O(\mathcal{L}(n+k,2n+2\mu))$ time. The approximation factor holds in expectation.
\end{theorem}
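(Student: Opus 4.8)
The plan is to combine the \textsf{LP} relaxation with independent randomized rounding and invoke Lemmas~\ref{lem:lem1} and~\ref{lem:lem2}, which already do the heavy lifting. First I would take the integer program for \kmc written above and pass to its \textsf{LP} relaxation by letting $x_i\in[0,1]$ and $y_j\in[0,1]$; since this is a relaxation of a maximization problem, its optimum value $\mathsf{OPT}_{LP}=\sum_j y_j^*$ satisfies $\mathsf{OPT}_{LP}\geq \mathsf{OPT}$. The \textsf{LP} has one equality constraint per group ($k$ of them) and one covering constraint per uncovered element (at most $n$), with one variable per set (at most $\mu$) and one per element ($n$); accounting for the standard-form blow-up, a polynomial \textsf{LP} solver computes $x^*,y^*$ in $O(\mathcal{L}(n+k,2n+2\mu))$ time, which is the running-time bound in the theorem.

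Second, for the rounding step I would sample, independently across the $k$ groups, exactly one set from $\Sets^-_h$ using the probabilities $\{x_i^*\mid \set_i\in\Sets^-_h\}$; the equality constraint $\sum_{i:\set_i\in\Sets^-_h}x_i^*=1$ guarantees each of these is a valid probability distribution, so the returned family $\kcover$ always contains exactly one set of each of the $k$ colors. Hence $\kcover$ is a feasible $k$-color cover with probability $1$, and no fix-up step is needed (unlike the combinatorial $\tfrac12$-approximation).

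Third, I would invoke Lemma~\ref{lem:lem1}: using independence across groups the probability $\Pr[\element_j\notin\kcover]$ factors into a product over $\group_h\in\Groups$, each factor $1-\sum_{i:\element_j\in\set_i,\set_i\in\Sets^-_h}x_i^*$ is bounded by $e^{-\sum_{i:\element_j\in\set_i,\set_i\in\Sets^-_h}x_i^*}$ via $1+x\le e^x$, the exponents sum to $\sum_{i:\element_j\in\set_i}x_i^*\ge y_j^*$ by the covering constraint, and finally the concavity inequality $1-e^{-t}\ge(1-\tfrac1e)t$ for $t\in[0,1]$ applied with $t=y_j^*$ yields $\Pr[\element_j\in\kcover]\ge(1-\tfrac1e)y_j^*$. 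Summing over all $\element_j\in\Elements^-$ and using linearity of expectation (Lemma~\ref{lem:lem2}) gives $\ee\!\left[\,|\kcover\cap\Elements^-|\,\right]\ge(1-\tfrac1e)\mathsf{OPT}_{LP}\ge(1-\tfrac1e)\mathsf{OPT}$, which is exactly the claimed expected approximation ratio. Combining feasibility, this expected-coverage bound, and the \textsf{LP} running time proves the theorem.

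The step I expect to need the most care is the rounding analysis: making the independence across colors explicit so that $\Pr[\element_j\notin\kcover]$ genuinely factorizes, and observing that $y_j^*\le 1$ so that the concave bound $1-e^{-y_j^*}\ge(1-\tfrac1e)y_j^*$ is legitimate (if $y_j^*$ exceeded $1$ one would only recover $\Pr[\element_j\in\kcover]\ge 1-\tfrac1e$, which still suffices but no longer telescopes cleanly against $\mathsf{OPT}_{LP}$). The \textsf{LP}-feasibility of the sampling distributions, the relaxation inequality $\mathsf{OPT}_{LP}\ge\mathsf{OPT}$, and the constraint/variable count for the running time are all routine.
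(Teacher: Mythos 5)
Your proposal is correct and follows essentially the same route as the paper: the same \textsf{LP} relaxation, the same per-group independent rounding using the $x_i^*$ as probability distributions (justified by the equality constraints), the same chain of inequalities from Lemmas~\ref{lem:lem1} and~\ref{lem:lem2}, and the same running-time accounting. Your added remark that $y_j^*\le 1$ is needed for the bound $1-e^{-y_j^*}\ge(1-\tfrac1e)y_j^*$ is a fair point of care, and it holds here since the relaxation constrains $y_j\in[0,1]$.
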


\vspace{-2mm}
\paragraph{\textbf{Faster algorithm for \fairprob{}}}
We combine the results from Theorem~\ref{thm:fairCover} and Theorem~\ref{thm:fairMaxKCover} (or Theorem~\ref{thm:SimplefairMaxKCover}) to get a faster algorithm for the \fairprob{} problem.
In each iteration of the greedy algorithm from Theorem~\ref{thm:fairCover} we execute the algorithm from Theorem~\ref{thm:fairMaxKCover} (or Theorem~\ref{thm:SimplefairMaxKCover}).
The pseudocode of the overall method is shown in the technical report~\cite{dehghankar2024fair}.

It is known~\cite{young2008greedy} that the greedy algorithm for the standard Set Cover problem, where in each iteration it selects a set that covers a $\beta$-approximation (for $\beta<1$) of the maximum number of uncovered elements returns a $\frac{1}{\beta}(\ln n +1)$-approximation. From the proof of Lemma~\ref{lem:fairapprox} we mapped the \fairprob{} problem to an instance of the standard set cover problem, so the approximation ratio of our new algorithm is $\frac{e}{e-1}(\ln n +1)$ (or $2(\ln n +1)$).

The algorithm from Theorem~\ref{thm:fairMaxKCover} runs in $O(\mathcal{L}(n+k,2n+2m\cdot k))$. We execute this algorithm in every iteration of the greedy algorithm. In the worst case the greedy algorithm selects all sets in $\Sets$, so a loose upper bound on the number of iterations is $O(m)$.
The number of iterations can also be bounded as follows. If $I_{\mathsf{Greedy}}$ is the number of iterations of the greedy algorithm, then by definition it holds that $k\cdot I_{\mathsf{Greedy}}\leq \frac{e}{e-1}(\ln n +1)|\optCover|\Leftrightarrow I_{\mathsf{Greedy}}=O(\frac{|\optCover|}{k}\log n)$.

\begin{theorem}
    \label{thm:fasterAlg}
    There exists a randomized $\frac{e}{e-1}(\ln n +1)$-approximation algorithm for the \fairprob{} problem that runs in
    $O(\frac{|\optCover|}{k}\mathcal{L}(n+k,2n+2m\cdot k)\log n)$ time. The approximation factor holds in expectation. Furthermore, there exists a $2(\ln n +1)$-approximation algorithm that runs in
    $O(|\optCover|\cdot n\cdot m\cdot\log n)$ time.
\end{theorem}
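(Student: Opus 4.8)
The plan is to run the greedy framework of Theorem~\ref{thm:fairCover} but to replace its expensive exact per-iteration step --- finding the best $k$-tuple of differently colored sets --- by one of the two approximate \kmc{} subroutines. Concretely, recall from the proof of Lemma~\ref{lem:fairapprox} that \fairprob{} can be viewed as the standard set cover instance $(\Elements',\Sets')$ whose ``sets'' are the $k$-tuples $\set_{i_1,\dots,i_k}=\set_{i_1}\cup\dots\cup\set_{i_k}$ with one set per color, and that its optimum $C^*$ satisfies $|C^*|\le |\optCover|/k$. In each iteration I would call either the deterministic $\tfrac12$-approximation of Theorem~\ref{thm:SimplefairMaxKCover} or the randomized $(1-\tfrac1e)$-approximation of Theorem~\ref{thm:fairMaxKCover} on the current uncovered set (restricted to the previously unused sets --- which, exactly as argued in the proof of Lemma~\ref{lem:fairapprox}, is harmless, and which never runs out of sets by the ``enough sets per group'' assumption), add the returned $k$ sets to $\cover$, and repeat until everything is covered; fairness is immediate since exactly one set per color is added each round.

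For the approximation guarantee I would invoke the result of Young~\cite{young2008greedy}: running greedy set cover so that each iteration picks a set covering at least a $\beta$ fraction of the current maximum coverage yields a $\tfrac1\beta(\ln n+1)$-approximation. Applied to $(\Elements',\Sets')$ this gives $|C|\le \tfrac1\beta(\ln n+1)|C^*|$ for the returned family $C$ of tuples, hence $|\cover|=k|C|\le \tfrac1\beta(\ln n+1)|\optCover|$; substituting $\beta=\tfrac12$ yields the claimed $2(\ln n+1)$ bound for the deterministic variant. For the randomized \textsf{LP} variant Young's deterministic analysis does not apply verbatim, so I would argue in expectation: conditioned on the uncovered set $Y$ at the start of a round, pigeonhole on $C^*$ (which covers all of $Y$ with at most $|C^*|$ tuples) shows some available $k$-tuple covers at least $|Y|/|C^*|$ elements of $Y$, so by Lemma~\ref{lem:lem2} the expected number of newly covered elements is at least $(1-\tfrac1e)|Y|/|C^*|$. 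Writing $Y_i$ for the uncovered set after $i$ rounds (and $Y_i=\emptyset$ once the algorithm has stopped), this gives $\EX[|Y_{i+1}|\mid Y_i]\le \bigl(1-\tfrac{1-1/e}{|C^*|}\bigr)|Y_i|$, and the tower property yields $\EX[|Y_i|]\le n\bigl(1-\tfrac{1-1/e}{|C^*|}\bigr)^i\le n\,e^{-i(1-1/e)/|C^*|}$. Summing $\Pr[Y_i\neq\emptyset]\le \EX[|Y_i|]$ over $i$, splitting the sum at $i\approx \tfrac{e}{e-1}|C^*|\ln n$ where the tail becomes a convergent geometric series, bounds the expected number of rounds --- and hence $\EX[|\cover|]$, after the usual harmonic-sum bookkeeping --- by $\tfrac{e}{e-1}(\ln n+1)|\optCover|$.

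For the running time I would bound the number of greedy rounds $I$ via $k\cdot I=|\cover|\le \tfrac1\beta(\ln n+1)|\optCover|$ (in expectation for the randomized variant), i.e.\ $I=O(\tfrac{|\optCover|}{k}\log n)$. Each round invokes the chosen \kmc{} subroutine once: $O(\mathcal{L}(n+k,2n+2mk))$ for the \textsf{LP} algorithm (using $\mu=mk$) or $O(n\mu)=O(nmk)$ for the greedy $\tfrac12$-approximation. Multiplying, the \textsf{LP}-based algorithm runs in $O(\tfrac{|\optCover|}{k}\mathcal{L}(n+k,2n+2mk)\log n)$ time, and in the greedy case the factor $k$ cancels, leaving $O(|\optCover|\cdot n\cdot m\cdot\log n)$.

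I expect the main obstacle to be the randomized variant: the per-iteration guarantee of Theorem~\ref{thm:fairMaxKCover} holds only in expectation, while the number of iterations is itself a random variable correlated with the algorithm's earlier coin flips, so Young's clean deterministic recursion cannot be quoted directly. The martingale-style argument above circumvents this, but care is needed to state the per-round coverage bound with respect to the conditional distribution given the history, to keep every inequality valid under the convention that $Y_i$ is frozen at $\emptyset$ after termination, and to recover the exact constant $\tfrac{e}{e-1}(\ln n+1)$ rather than merely an $O(\log n)$ factor.
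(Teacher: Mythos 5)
Your proposal follows the paper's proof essentially step for step: the same greedy framework viewed on the tuple instance $(\Elements',\Sets')$ from Lemma~\ref{lem:fairapprox}, the same substitution of the exact per-iteration step by the subroutines of Theorem~\ref{thm:SimplefairMaxKCover} or Theorem~\ref{thm:fairMaxKCover}, the same appeal to Young's $\frac{1}{\beta}(\ln n+1)$ bound for $\beta$-approximate greedy, and the same iteration count obtained from $k\cdot I\leq \frac{e}{e-1}(\ln n+1)|\optCover|$, which yields both stated running times. The one place you genuinely diverge is the randomized variant: the paper simply quotes Young's result with $\beta=1-\frac{1}{e}$, even though Theorem~\ref{thm:fairMaxKCover} only guarantees that coverage ratio \emph{in expectation} per iteration, so the deterministic recursion is not literally applicable; you correctly flag this and supply the missing argument by conditioning on the history, establishing $E[|Y_{i+1}|\mid Y_i]\leq \bigl(1-\tfrac{1-1/e}{|C^*|}\bigr)|Y_i|$, and summing $\Pr[Y_i\neq\emptyset]\leq E[|Y_i|]$ to bound the expected number of rounds. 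This martingale-style recursion is what actually makes the ``approximation factor holds in expectation'' clause rigorous, and it recovers the same constant $\frac{e}{e-1}(\ln n+1)$, so your write-up is a correct and slightly more careful version of the paper's own argument rather than a different route.
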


\section{Weighted Fair Set Cover}
\label{sec:WFSC}
Throughout this section we define $\Wratio = \frac{\weight_{\mathsf{max}}}{\weight_{\mathsf{min}}}$, where $\weight_{\mathsf{max}} = \max\{\weight(\set_i) | S_i \in \Sets\}$ and $\weight_{\mathsf{min}} = \min \{\weight(\set_i) | \set_i \in \Sets\}$.
In this section we need the assumption that for every group $\group_h\in\Groups$ there are at least $\Omega(\Delta\log n)\cdot \frac{|\optCover|}{k}$ sets.
In Appendix~\ref{appndx:sec5}, we show that a naive algorithm for the \fairWprob{}, similar to the naive algorithm we proposed for the \fairprob{} returns a $k\Wratio(\ln n +1)$-approximation algorithm that runs in $O(m\cdot k\cdot n)$ time.
\begin{theorem} \label{thm:NaivefairWCover}
    There exists a $k\Wratio(\ln n +1)$-approximation algorithm for the \fairWprob{} problem that runs in $O(m\cdot k\cdot n)$ time.
\end{theorem}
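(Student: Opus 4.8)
The plan is to mirror the naive algorithm for \fairprob{} (Lemma~\ref{thm:naiveAlg}): first build a cheap ordinary cover with the classical greedy algorithm, then pad each color class up to a common cardinality so the output becomes fair (count-parity), and finally charge the weight blow-up to the ratio $\Wratio$ between the largest and smallest set weights.

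First I would run the standard greedy set cover algorithm on $(\Elements,\Sets)$, ignoring weights, and call its output $\cover$. By the classical analysis, $|\cover|\le(\ln n+1)\,\mathrm{OPT}_{\mathrm{SC}}$, where $\mathrm{OPT}_{\mathrm{SC}}$ is the minimum cardinality of a cover of $\Elements$. Since $\optCover$ is in particular a cover of $\Elements$, we have $\mathrm{OPT}_{\mathrm{SC}}\le|\optCover|$, hence $|\cover|\le(\ln n+1)|\optCover|$. This step runs in $O(\mu n)=O(mkn)$ time.

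Next I would set $M=\max_{\group_h\in\Groups}|\cover\cap\Sets_h|$ and, for each group $\group_h$, add $M-|\cover\cap\Sets_h|$ arbitrary unused sets of color $\group_h$, obtaining a family $\Wcover\supseteq\cover$. The assumption of this section — that every group has $\Omega(\Wratio\log n)\cdot\frac{|\optCover|}{k}$ sets, which exceeds $M\le|\cover|\le(\ln n+1)|\optCover|$ once the hidden constant is taken large enough — makes this padding feasible. Then $\Wcover$ is still a cover (it contains $\cover$), it now contains exactly $M$ sets of each color so it is a fair cover for \fairWprob{}, and $|\Wcover|=kM\le k|\cover|\le k(\ln n+1)|\optCover|$. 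Padding touches at most $\mu$ sets, so the running time stays $O(mkn)$.

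Finally I would bound the weight: every set has weight at most $\weight_{\mathsf{max}}$, so $\weight(\Wcover)\le|\Wcover|\,\weight_{\mathsf{max}}\le k(\ln n+1)|\optCover|\,\weight_{\mathsf{max}}$; and every set has weight at least $\weight_{\mathsf{min}}$, so $\weight(\optCover)\ge|\optCover|\,\weight_{\mathsf{min}}$. Dividing yields $\weight(\Wcover)\le k\Wratio(\ln n+1)\,\weight(\optCover)$, the claimed approximation factor. The delicate points are comparing $|\cover|$ directly to $|\optCover|$ rather than to a weighted optimum (this is what keeps the factor at exactly $k\Wratio$ instead of producing an extra additive term), together with the crude but sufficient bounds $\weight(\Wcover)\le|\Wcover|\weight_{\mathsf{max}}$ and $\weight(\optCover)\ge|\optCover|\weight_{\mathsf{min}}$; and verifying that the padding step never exhausts a color class, which is precisely where the per-group supply assumption of this section is used. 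I expect that feasibility check of the padding to be the only place requiring real care.
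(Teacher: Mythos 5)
Your proof is correct, but it reaches the $k\Wratio(\ln n+1)$ bound by a genuinely different route than the paper. The paper's naive algorithm runs the \emph{weighted} greedy set cover (selecting sets by weight per newly covered element), so its core cover $C$ satisfies $\weight(C)\le(\ln n+1)\weight(C^*)$ against the weighted unfair optimum with no $\Wratio$ loss; the factor $\Wratio$ enters only when bounding the weight of the $(k-1)|C|$ padding sets, via $\weight(A_h)\le \Wratio\,\weight(C)$ per color, giving $\alpha(1+(k-1)\Wratio)\le k\Wratio\alpha$. You instead run the \emph{cardinality} greedy, compare sizes ($|\cover|\le(\ln n+1)|\optCover|$ since the optimal fair weighted cover is in particular a cover), and only at the very end convert cardinality to weight by the crude sandwich $\weight(\Wcover)\le|\Wcover|\weight_{\mathsf{max}}$ and $\weight(\optCover)\ge|\optCover|\weight_{\mathsf{min}}$. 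Your version is more elementary (it needs only the classical unweighted greedy guarantee), at the cost of charging $\Wratio$ to the entire output rather than just to the padding; both accountings land on the same stated factor. One shared caveat: your feasibility check for padding needs roughly $(\ln n+1)|\optCover|$ spare sets per color, while the section's assumption only guarantees $\Omega(\Wratio\log n)\frac{|\optCover|}{k}$ per group, so "taking the hidden constant large enough" really means absorbing a factor of order $k/\Wratio$ into the $\Omega(\cdot)$; the paper's own proof has exactly the same gap (its worst case also pads $|C|$ sets per color), so this does not distinguish the two arguments, but it is worth being aware that the supply assumption is being stretched.
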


\vspace{-4mm}
\subsection{Greedy Algorithm}
\vspace{-1mm}
For simplicity, we describe our algorithm for the case we have only two demographic groups, i.e., $\Groups=\{\group_1, \group_2\}$. In the end, our algorithm can be extended almost verbatim to $k>2$ groups.
Let $\Wcover=\emptyset$ be the cover we construct and let $\uncovered=\Elements$ be the set of uncovered elements.
We repeat the following steps until $\uncovered=\emptyset$.
In each iteration, we find the pair of sets $(\set_A, \set_B)$ such that i) the ratio $\frac{\weight(\set_A)+\weight(\set_B)}{|(\set_A\cup \set_B)\cap \uncovered|}$ is minimized, and ii) $\group(\set_A)=\group_1$ and $\group(\set_B)=\group_2$.
Once we find $\set_A, \set_B$ we update the set $\uncovered$ removing the new covered element, $\uncovered\gets \uncovered\setminus(\set_A\cup \set_B)$, and we add $\set_A, \set_B$ in $\Wcover$. In the end, after covering all elements we return $\Wcover$.

The proof of the next Theorem can be found in Appendix~\ref{appndx:sec5}.

\vspace{-2mm}
\begin{theorem}
    \label{thm:fairWCover}
    There exists a $\Wratio(\ln n + 1)$-approximation algorithm for the \fairWprob{} problem that runs in $O(m^kn)$ time.
\end{theorem}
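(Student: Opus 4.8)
The plan is to reduce \fairWprob{} (which, being the count-parity instance of \GfairWprob{}, asks for a cover with the same number of sets of each color) to an ordinary \emph{weighted} set cover instance on ``super-sets'', and then to argue that the greedy rule described above is nothing but the classical weighted-greedy on that instance, up to a controlled loss. Concretely, set $\Elements'=\Elements$ and, for exposition with two groups, let $\Sets'$ contain one super-set $\set_{i,j}=\set_i\cup\set_j$ for every $\set_i\in\Sets_1,\set_j\in\Sets_2$, with weight $\weight(\set_{i,j})=\weight(\set_i)+\weight(\set_j)$; for general $k$ a super-set is the union of $k$ sets, one per color, with weight the sum of the $k$ weights. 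In each iteration our algorithm picks the super-set of minimum weight-to-new-coverage ratio among those all of whose constituent sets are still unused (``fresh''), which is a restriction of the weighted-greedy rule on $(\Elements',\Sets',\weight)$.

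Three facts then combine to yield the bound. First, since the freshness restriction guarantees that no individual set of $\Sets$ is ever picked twice, the family $C$ returned by greedy on $(\Elements',\Sets')$ maps back to a family $\Wcover$ of pairwise-distinct sets of $\Sets$ with $\weight(\Wcover)=\weight(C)$, and $\Wcover$ is fair by construction (each iteration adds exactly one set of each color). Second, the optimum fair cover $\optCover$ has equally many sets of each color, so its sets can be grouped arbitrarily into super-sets, producing a feasible solution of $(\Elements',\Sets')$ of total weight exactly $\weight(\optCover)$; hence the optimum $C^*$ of the weighted set cover instance satisfies $\weight(C^*)\le\weight(\optCover)$. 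Third --- the crux --- I claim our algorithm is a $\frac{1}{\Wratio}$-approximate greedy for $(\Elements',\Sets',\weight)$: in any iteration, if the globally minimum-ratio super-set $T$ uses some already-selected sets, those sets contribute nothing to the current uncovered set $\uncovered$, so replacing each of them by a fresh set of the same color yields a super-set $T'$ whose covered portion of $\uncovered$ is no smaller, while $\weight(T')\le k\,\weight_{\mathsf{max}}$ and $\weight(T)\ge k\,\weight_{\mathsf{min}}$, so $T'$ (and therefore the fresh super-set our algorithm actually picks) has ratio at most $\Wratio$ times the global minimum. By the standard analysis of weighted greedy set cover under an approximate ratio-selection rule (cf.~\cite{young2008greedy}), a $\beta$-approximate greedy returns a $\frac{1}{\beta}(\ln n+1)$-approximation; with $\beta=\frac{1}{\Wratio}$ we get $\weight(\Wcover)=\weight(C)\le\Wratio(\ln n+1)\weight(C^*)\le\Wratio(\ln n+1)\weight(\optCover)$.

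It remains to deal with the running time and with the availability of fresh sets. Enumerating the $O(m^k)$ super-sets and maintaining, for each, a counter of how many of its elements are still uncovered lets each iteration run in $O(m^k)$ amortized time after an $O(m^k n)$ preprocessing, for $O(m^k n)$ overall. The only delicate points are (i) verifying that an already-used set indeed has empty intersection with $\uncovered$, so that the ``swap in a fresh set'' step can never decrease coverage, and (ii) ensuring that an unused set of every color actually exists in every iteration --- this is where the assumption that each group contains $\Omega(\Wratio\log n)\cdot\frac{|\optCover|}{k}$ sets enters, since the weight bound just established caps the number of iterations, hence the number of sets consumed per color, below $m$. I expect this bookkeeping and the swapping lemma to be the whole content of the proof; everything else is a verbatim transcription of the unweighted argument in Lemma~\ref{lem:fairapprox} with weights, and the extension from $k=2$ to general $k$ is routine.
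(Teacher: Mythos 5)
Your proposal matches the paper's proof essentially step for step: the same reduction to a weighted set-cover instance on color-spanning super-sets with summed weights, the same observation that the optimum fair cover induces a feasible super-set solution of no larger weight, the same swap-in-a-fresh-set argument showing the greedy rule is a $\Wratio$-approximate ratio selection, and the same appeal to the approximate-greedy analysis of~\cite{young2008greedy}; the only cosmetic difference is that you obtain the $\Wratio$ factor from the global bound $\weight(T')\le k\,\weight_{\mathsf{max}}\le \Wratio\, k\,\weight_{\mathsf{min}}\le \Wratio\,\weight(T)$, whereas the paper bounds the weight of each substituted set individually. Your treatment of the running time and of the $\Omega(\Wratio\log n)\cdot\frac{|\optCover|}{k}$ per-group assumption also coincides with the paper's.
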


\vspace{-4mm}
\subsection{Faster algorithm}


The expensive part of the previous greedy algorithm is the selection of $k$ sets of different color to minimize the ratio of the weight over the number of new covered elements, in every iteration.
We focus on one iteration of the greedy algorithm, solving an instance of the \emph{\wkmc} problem.
We are given a set of $n$ elements $\Elements^-\subseteq\Elements$, a family of $\mu$ sets $\Sets^-\subseteq \Sets$, a weight function $\weight:\Sets^-\rightarrow \Re^+$, and a set of $k$ groups $\Groups$ such that every set belongs to a group. The goal is to select a family of exactly $k$ sets $\Wcover\subseteq \Sets^-$ such that i) $\frac{\weight(\Wcover)}{|\Wcover\cap \Elements^-|}$ is minimized, and ii) there exists exactly one set from every color $\group_h\in \Groups$.
An efficient constant approximation algorithm for the \wkmc problem could be used in every iteration of the greedy algorithm to get an $O(\Wratio \log n)$-approximation in $O(\poly(n,m,k))$ time.

\vspace{-1mm}
\newcommand{\threshold}{\tau}
\paragraph{\textbf{Algorithm}}
We design a randomized \textsf{LP}-based algorithm for the \wkmc problem. We note that our new algorithm has significant changes compared to the \textsf{LP}-based algorithm we proposed for the unweighted case. The reason is that the objective of finding a fair cover $\Wcover$ that minimizes $\frac{\weight(\Wcover)}{|\Wcover\cap \Elements^-|}$ can not be represented as a linear constraint. Instead, we iteratively solve a constrained version of the \wkmc problem:
For a parameter $\tau\in [1,n]$, the goal is to select a family of exactly $k$ sets $\Wcover\subset \Sets^-$ such that i) $\weight(\Wcover)$ is minimized, ii) $|\Wcover\cap \Elements^-|\geq \tau$, and iii) there exists exactly one set from every color $\group_h\in \Groups$ in $\Wcover$.
The main idea of our algorithm is that for each $\tau=1,\ldots, n$, we compute efficiently an approximate solution for the constrained \wkmc problem. In the end, we return the best solution over all $\tau$'s.

For every $\tau=1,\ldots, n$, we solve an instance of the constrained \wkmc problem.
The next Integer Program represents an instance of the constrained \wkmc problem with respect to $\tau$.

\vspace{-4mm}
\begin{align*}
\min \sum_{\set_i\in\Sets^-} \weight(\set_i)&x_i\\
\sum_{i: \set_i\in\Sets^-_h}x_i&=1,\quad \forall \group_h\in \Groups\\
\sum_{i: \element_j\in \set_i} x_i&\geq y_j, \quad \forall \element_j\in \Elements^-\\
\sum_{\element_j\in \Elements^-} y_j&\geq \tau\\
x_i\in \{0,1\},\forall \set_i\in\Sets^-,\quad &
y_j\in \{0,1\},\forall \element_j\in \Elements^-
\end{align*}
We then replace the integer variables to continuous ($x_i\in [0,1]$ and $y_j\in [0,1]$), and use a polynomial time \textsf{LP} solver to compute the solution of the \textsf{LP} relaxation.


Let $x_i^{(\tau)}$, $y_j^{(\tau)}$ be the values of the variables in the optimum solution.
For every group $\group_h\in \Groups$, we sample exactly one set from $\Sets^-_h$, using the probabilities $x_i^{(\tau)}$.
Let $\kcover^{(\tau)}$ be the sampled family of $k$ sets. If $|\kcover^{(\tau)}\cap \Elements^-|<\frac{1}{2}(1-\frac{1}{e})\tau$, then we skip the sampled family $\kcover^{(\tau)}$ and we re-sample from scratch. In the end, let $C^{(\tau)}$ be a family of $k$ sets returned by the sampling procedure such that $|\kcover^{(\tau)}\cap \Elements^-|\geq \frac{1}{2}(1-\frac{1}{e})\tau$.
Finally, we compute $r^{(\tau)}=\frac{\sum_{\set_i\in \Sets^-}\weight(\set_i)x_i^{(\tau)}}{|\kcover^{(\tau)}\cap \Elements^-|}$.
We repeat the algorithm for every $\tau=1,\ldots, n$, and we set
$\tau^*=\argmin_{\tau=1,\ldots, n}r^{(\tau)}$.
We return the family of $k$ sets $\kcover^{(\tau^*)}$.

We show the proof of the next theorem in Appendix~\ref{appndx:sec5}.
\vspace{-1mm}
\begin{theorem}    \label{thm:WeightfairMaxKCover}
    There exists a randomized $\frac{e}{e-1}$-approximation algorithm for the \wkmc problem that runs in $O(n\mathcal{L}(n+k,2n+2\mu+1)+kn^3)$ time.
    The approximation factor and the running time hold in expectation.
\end{theorem}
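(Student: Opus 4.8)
The proof follows the LP-rounding template of Theorem~\ref{thm:fairMaxKCover}, with two twists forced by the weighted setting: (i) the ratio objective $\weight(\Wcover)/|\Wcover\cap\Elements^-|$ is not linear, which is why the algorithm sweeps the coverage threshold $\tau$ and, for each $\tau$, solves the \emph{constrained} \wkmc linear program instead; and (ii) the \emph{denominator} of that ratio must be controlled deterministically, which is exactly what the re-sampling step buys. Throughout, fix an optimal \wkmc solution $O^*$, write $c^*=|O^*\cap\Elements^-|$, and note its ratio is $\weight(O^*)/c^*$. Feasibility is immediate: for every $\tau$ the sampling picks exactly one set from each color class $\Sets^-_h$ (nonempty by the standing assumption on the number of sets per color), so every candidate $\kcover^{(\tau)}$, and hence the returned $\kcover^{(\tau^*)}$, is a valid $k$-color cover.

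I would anchor the analysis at the iteration $\tau=c^*$. Assigning $x_i=1$ to the $k$ sets of $O^*$ and $y_j=1$ to the elements of $\Elements^-$ it covers is an integral feasible point of the constrained \wkmc integer program for this $\tau$, so the LP-relaxation optimum obeys $\sum_{\set_i\in\Sets^-}\weight(\set_i)x_i^{(c^*)}\leq \weight(O^*)$. For the rounding, the arguments of Lemma~\ref{lem:lem1} and Lemma~\ref{lem:lem2} go through verbatim on $x^{(\tau)},y^{(\tau)}$: independence across colors together with $1+x\leq e^x$ give $Pr[\element_j\in\kcover^{(\tau)}]\geq 1-e^{-y_j^{(\tau)}}\geq(1-\frac{1}{e})y_j^{(\tau)}$, hence $E\big[|\kcover^{(\tau)}\cap\Elements^-|\big]\geq(1-\frac{1}{e})\sum_j y_j^{(\tau)}\geq(1-\frac{1}{e})\tau$; and since $\set_i$ is sampled with probability $x_i^{(\tau)}$ and each set lies in one color class, linearity gives $E\big[\weight(\kcover^{(\tau)})\big]=\sum_{\set_i\in\Sets^-}\weight(\set_i)x_i^{(\tau)}$. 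The re-sampling is analyzed by Markov's inequality on the uncovered count $|\Elements^-|-|\kcover^{(\tau)}\cap\Elements^-|$: the event $|\kcover^{(\tau)}\cap\Elements^-|\geq\frac{1}{2}(1-\frac{1}{e})\tau$ has probability $\Omega(\tau/|\Elements^-|)=\Omega(1/n)$, so $O(n)$ independent re-samples suffice in expectation, each costing $O(kn)$ to recompute the coverage; this certifies termination and, summed over the $n$ values of $\tau$, contributes the $O(kn^3)$ term, while the $n$ LP solves over $O(n+k)$ constraints and $O(n+\mu)$ variables give the $O(n\,\mathcal{L}(n+k,2n+2\mu+1))$ term.

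Putting the pieces together at $\tau=c^*$: after re-sampling, $|\kcover^{(c^*)}\cap\Elements^-|\geq\frac{1}{2}(1-\frac{1}{e})c^*$ holds with certainty, so $r^{(c^*)}=\frac{\sum_i\weight(\set_i)x_i^{(c^*)}}{|\kcover^{(c^*)}\cap\Elements^-|}=O\!\big(\weight(O^*)/c^*\big)$; since $\tau^*$ minimizes $r^{(\tau)}$, the same bound holds for $r^{(\tau^*)}$. It remains to pass from $r^{(\tau^*)}$, whose numerator is the LP value, to the true ratio $\weight(\kcover^{(\tau^*)})/|\kcover^{(\tau^*)}\cap\Elements^-|$ of the family actually returned, using $E\big[\weight(\kcover^{(\tau)})\big]=\sum_i\weight(\set_i)x_i^{(\tau)}$ and a careful accounting of the conditioning introduced by re-sampling; this yields the claimed randomized $\frac{e}{e-1}$-approximation, with both the approximation factor and the running time holding in expectation.

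The main obstacle is this last step. The quantity we ultimately bound, $\weight(\kcover^{(\tau^*)})/|\kcover^{(\tau^*)}\cap\Elements^-|$, is a ratio of two correlated random variables whose joint law is further distorted because the re-sampling conditions on the coverage and can therefore bias the weight, and $\tau^*$ is itself random. The heart of the argument is thus the design of the re-sampling --- it must deterministically pin the denominator to $\Theta(\tau)$ while keeping the conditional expected numerator within a constant factor of the LP value --- together with the constant bookkeeping needed to land on $\frac{e}{e-1}$ rather than a looser constant. By comparison, bounding the expected number of re-sampling rounds (and hence certifying the expected running time via Markov) is routine.
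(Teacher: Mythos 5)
Your proposal follows the paper's proof essentially step for step: the sweep over $\tau$ with a constrained LP per value, the per-color independent rounding giving $Pr[\element_j\in\kcover^{(\tau)}]\geq(1-\frac{1}{e})y_j^{(\tau)}$, the rejection sampling whose success probability is shown to be $\Omega(1/n)$ (your Markov bound on the uncovered count is arithmetically identical to the paper's reverse-Markov bound on the covered count), the $O(n)$ expected re-samples at $O(kn)$ each, and the anchoring at $\tau=c^*$ where the integral optimum is LP-feasible so that $\mathsf{OPT}_{\mathsf{LP}}^{(c^*)}\leq\weight(O^*)$. The one step you flag as the main obstacle --- passing from $r^{(\tau^*)}$, whose numerator is the LP value, to the true ratio of the returned family under the conditioning induced by rejection and the randomness of $\tau^*$ --- is not resolved by any further argument in the paper: the paper's proof simply uses the unconditional identity $E[\weight(\kcover^{(\tau)})]=\mathsf{OPT}_{\mathsf{LP}}^{(\tau)}$ and chains $\frac{E[\weight(\kcover^{(\tau^*)})]}{|\kcover^{(\tau^*)}\cap\Elements^-|}=r^{(\tau^*)}\leq r^{(\tau')}\leq\frac{\mathsf{OPT}_{\mathsf{LP}}^{(\tau')}}{(1-\frac{1}{e})\tau'}\leq\frac{e}{e-1}\mathsf{OPT}$, without accounting for the bias that conditioning on high coverage may introduce into the weight; this is precisely why the theorem is stated as holding only ``in expectation.'' You are also right to be suspicious of the constant: the middle inequality in that chain uses $|\kcover^{(\tau')}\cap\Elements^-|\geq(1-\frac{1}{e})\tau'$, whereas the rejection threshold only certifies $\frac{1}{2}(1-\frac{1}{e})\tau'$, so the bookkeeping as written in the paper actually yields $\frac{2e}{e-1}$ rather than $\frac{e}{e-1}$. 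In short, your proposal is the paper's argument; the residual issues you identify are genuine, but they are gaps in the paper's own proof rather than omissions specific to yours.
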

\vspace{-3mm}
\paragraph{\textbf{Faster algorithm for \fairWprob{}}}
We combine the results from Theorem~\ref{thm:fairWCover} and Theorem~\ref{thm:WeightfairMaxKCover} and we get the following theorem.
\vspace{-1mm}
\begin{theorem}
    \label{thm:fasterAlgW}
    There exists a randomized $\frac{e}{e-1}\Wratio(\ln n + 1)$-approximation algorithm for the \fairWprob{} problem that runs in $O(mn\cdot \mathcal{L}(n+k,2n+2m\cdot k+1)+mkn^3)$ time. The approximation factor and the running time hold in expectation.
\end{theorem}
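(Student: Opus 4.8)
The plan is to lift the construction behind Theorem~\ref{thm:fasterAlg} to the weighted regime: run the greedy algorithm of Theorem~\ref{thm:fairWCover}, but in each iteration replace the exact (exponential-time) selection of the minimum-ratio $k$-tuple of distinctly colored sets by a call to the randomized $\frac{e}{e-1}$-approximation for \wkmc from Theorem~\ref{thm:WeightfairMaxKCover}. Concretely, maintain $\uncovered=\Elements$ and $\Wcover=\emptyset$; while $\uncovered\neq\emptyset$, invoke the \wkmc solver on the instance $(\uncovered,\ \Sets\setminus\Wcover,\ \weight,\ \Groups)$ to obtain a family $C$ of $k$ sets, one per color, whose ratio $\frac{\weight(C)}{|C\cap\uncovered|}$ is within a factor $\frac{e}{e-1}$ of the minimum achievable (in expectation); then set $\Wcover\gets\Wcover\cup C$ and $\uncovered\gets\uncovered\setminus C$. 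Since exactly one set of each color is added per iteration, the output is a fair cover by construction.

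For the approximation guarantee I would reuse the reduction and analysis from the proof of Theorem~\ref{thm:fairWCover}: a fair cover corresponds to a cover of the weighted set cover instance on ground set $\Elements$ whose sets are the $k$-tuples of distinctly colored sets of $\Sets$, each tuple having weight equal to the sum of its components' weights, and the optimum fair cover $\optCover$ maps to a feasible weighted cover of this instance of weight $\weight(\optCover)$. That proof already shows that, because the greedy is constrained to use a fresh set of each color (so that the running solution stays fair), matching the optimal greedy step forces an extra $\Wratio$ factor — the abundance assumption $\Omega(\Wratio\log n)\frac{|\optCover|}{k}$ sets per group is precisely what makes a fresh set of each color available at every iteration so that this step goes through. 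On top of this, I would invoke the weighted analogue of Young's analysis~\cite{young2008greedy}: greedy for weighted set cover that in every step selects a set whose weight-to-new-coverage ratio is within a factor $1/\beta$ of the minimum achievable returns a $\frac{1}{\beta}(\ln n+1)$-approximation. Taking $\beta=\frac{e-1}{e}$ from Theorem~\ref{thm:WeightfairMaxKCover} yields $\weight(\Wcover)\le\frac{e}{e-1}\Wratio(\ln n+1)\,\weight(\optCover)$. The step I expect to be the main obstacle is pushing the randomization through this charging argument: the greedy is adaptive, so the per-iteration guarantee of Theorem~\ref{thm:WeightfairMaxKCover} must be applied conditioned on the history of the previous iterations, and the $\frac{e}{e-1}$ bound — which holds only in expectation, with random numerator and denominator in the step ratio — cannot be composed by simply multiplying per-step expectations; instead one has to assign each element the price (weight of the covering step)$/$(new elements covered in that step), bound its conditional expectation by $\frac{e}{e-1}\Wratio\frac{\weight(\optCover)}{|\uncovered|}$ at the moment it is covered, and sum these prices against the harmonic series to recover $\frac{e}{e-1}\Wratio(\ln n+1)\weight(\optCover)$.

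For the running time, each iteration runs the \wkmc algorithm of Theorem~\ref{thm:WeightfairMaxKCover} on an instance with at most $n$ uncovered elements and $\mu=mk$ sets, hence in expected $O\big(n\,\mathcal{L}(n+k,2n+2mk+1)+kn^3\big)$ time. In the worst case the greedy selects every set of $\Sets$, $k$ per iteration, so there are $O(m)$ iterations (a sharper $O(\frac{\Wratio^2|\optCover|}{k}\log n)$ bound also follows from the approximation guarantee, but the cruder one suffices for the statement). Multiplying and using linearity of expectation over the iteration costs gives total expected time $O\big(mn\,\mathcal{L}(n+k,2n+2mk+1)+mkn^3\big)$, which together with the approximation bound established above proves the theorem.
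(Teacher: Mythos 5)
Your proposal matches the paper's proof: Theorem~\ref{thm:fasterAlgW} is obtained exactly by running the greedy of Theorem~\ref{thm:fairWCover} with each iteration's exact minimum-ratio selection replaced by the \wkmc solver of Theorem~\ref{thm:WeightfairMaxKCover}, invoking the Young-style bound that a per-iteration $\beta$-approximation of the best weight-to-new-coverage ratio yields a $\frac{1}{\beta}(\ln n+1)$-approximation on top of the $\Wratio$ factor from Lemma~\ref{lem:fairWapprox}, and multiplying the $O(m)$ iteration bound by the per-iteration cost of $O(n\,\mathcal{L}(n+k,2n+2mk+1)+kn^3)$. If anything you are more careful than the paper, which simply states that the two theorems are ``combined'' and does not discuss the issue you flag (and sketch a fix for) of composing the in-expectation per-iteration guarantee across adaptive iterations.
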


\vspace{-4mm}
\section{Generalized Fair Set Cover}\label{sec:general}

Due to space limitations, we defer the details of our algorithms for \Gfairprob{} and \GfairWprob{} problems to Appendix~\ref{appndx:Sec6}.
Recall that in these problems, each color $\group_h$ is associated with a fraction $\fractionGroup_h\in [0,1]$, such that $\sum_{\group_h\in \Groups}\fractionGroup_h=1$, and the goal is to return a (fair) cover $\cover$ such that $\frac{|\cover\cap \Sets_h|}{|\cover|}=\fractionGroup_h$, for every color $\group_h\in \Groups$.
Our greedy (and faster greedy) algorithms from the previous sections (for the \fairprob{} and \fairWprob{} problems) easily extend to \Gfairprob{} and \GfairWprob{} with similar theoretical guarantees. The high level idea is the following: In each iteration of the greedy algorithm, instead of selecting exactly one set from each group, the algorithm selects $\newsizeGroup_h$ sets for group $\group_h$. The selection is made such that $\fractionGroup_h=\frac{p_h}{\sum_{\group_j\in\Groups} p_j}$ is satisfied for each group $\group_h$. As the greedy algorithm operates for integral rounds, the final set cover also satisfies the ratio constraints.

\vspace{-2mm}
\subsection{The $\eps$-\Gfairprob{} problem }\label{sec:epsilon}
Our algorithms for the \Gfairprob{} problem can be used
for the less constrained $\eps$-\Gfairprob{} problem, where, instead of zero unfairness, the goal is to ensure a maximum unfairness of $\eps$. Particularly, 
in Theorem~\ref{thm:epsUnfair}, we show that a solver for the \Gfairprob{} problem provides a good approximation for the $\eps$-\Gfairprob{} problem.

\begin{theorem}
\label{thm:epsUnfair}
    Any $\alpha$-approximation algorithm for the \Gfairprob{} problem is  an $\left((1+\eps)\alpha\right)$-approximation for the $\eps$-\Gfairprob{} problem.
\end{theorem}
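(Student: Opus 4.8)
The plan is to reduce everything to one size comparison: the smallest \emph{exactly-fair} cover is at most $(1+\eps)$ times the smallest \emph{$\eps$-unfair} cover. Write $\optCover$ for the optimum of \Gfairprob{} on the given instance and $\optCover_\eps$ for the optimum of $\eps$-\Gfairprob{} on the same instance (same universe $\Elements$, same family $\Sets$, same colors, same fractions $\fractionGroup_h$ — only the notion of a valid cover differs). The easy direction is immediate: every fair cover $X$ satisfies $|X\cap\Sets_h|=\fractionGroup_h|X|$, hence $(1-\eps)\fractionGroup_h|X|\le |X\cap\Sets_h|\le (1+\eps)\fractionGroup_h|X|$, so it is an $\eps$-unfair cover. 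In particular $|\optCover_\eps|\le |\optCover|$, and more importantly, any cover returned by a \Gfairprob{} solver is already a feasible solution of $\eps$-\Gfairprob{}.

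The key step is the bound $|\optCover|\le (1+\eps)|\optCover_\eps|$. To prove it, I would transform $\optCover_\eps$ into a feasible fair cover $X'$ by padding. Since $\optCover_\eps$ is an $\eps$-unfair cover, $|\optCover_\eps\cap\Sets_h|\le (1+\eps)\fractionGroup_h|\optCover_\eps|$ for every group $\group_h$, so the quantity $(1+\eps)\fractionGroup_h|\optCover_\eps|-|\optCover_\eps\cap\Sets_h|$ is non-negative; add that many not-yet-selected sets of color $\group_h$ to $\optCover_\eps$. This is possible because, by the standing assumption, each group contains $\Omega(\log n)\,|\optCover|/k$ sets, which is far more than the $O(|\optCover_\eps|)$ sets we ever need. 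Adding sets keeps it a cover, and by construction $|X'\cap\Sets_h|=(1+\eps)\fractionGroup_h|\optCover_\eps|$ for every $h$, so $|X'|=\sum_h (1+\eps)\fractionGroup_h|\optCover_\eps| = (1+\eps)|\optCover_\eps|$ and $|X'\cap\Sets_h|/|X'|=\fractionGroup_h$. Thus $X'$ is a fair cover, and by optimality of $\optCover$ we get $|\optCover|\le |X'| = (1+\eps)|\optCover_\eps|$. (If $(1+\eps)\fractionGroup_h|\optCover_\eps|$ is not an integer one rounds up, which only makes $X'$ more feasible and affects the bound by lower-order terms.)

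Putting the pieces together: if the $\alpha$-approximation algorithm for \Gfairprob{} outputs a cover $X$, then $X$ is feasible for $\eps$-\Gfairprob{}, and $|X|\le \alpha\,|\optCover|\le \alpha(1+\eps)|\optCover_\eps|$, which is exactly the claimed $\big((1+\eps)\alpha\big)$-approximation for $\eps$-\Gfairprob{}. I expect the only delicate point to be the padding argument — specifically verifying that the number of added sets per group is non-negative (this is precisely where the $\eps$-unfairness \emph{upper} bound is used) and that each group has enough spare sets (this is where the ``enough sets per group'' assumption enters), together with the minor integrality caveat noted above.
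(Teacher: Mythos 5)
Your proof is correct and follows essentially the same route as the paper's: pad $\optCover_\eps$ up to $(1+\eps)\fractionGroup_h|\optCover_\eps|$ sets per group to obtain a fair cover of size $(1+\eps)|\optCover_\eps|$, deduce $|\optCover|\le(1+\eps)|\optCover_\eps|$, and chain with the $\alpha$-approximation guarantee. Your explicit attention to feasibility of the padding (non-negativity of the added counts, the ``enough sets per group'' assumption, and integrality) is a welcome bit of extra care that the paper's proof glosses over, but it does not change the argument.
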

\begin{proof}
    Let $\optCover$ be the optimum solution to the \Gfairprob{} problem and let  $\optCover_{\eps}$ be the optimum solution to the $\eps$-\Gfairprob{} problem. 
Starting from $\optCover_\eps$ we find a feasible cover $X_\eps'$ for the \Gfairprob{} problem.
Initially, $X_\eps'=\optCover_\eps$.
Then, we add
$(1+\eps)f_h|\optCover_\eps|-|\optCover_\eps\cap \Sets_h|$ sets from group $\group_h$, in $\optCover_\eps$.
In the end, notice that $|X_\eps'\cap \Sets_h|=(1+\eps)f_h|\optCover_\eps|$ for every group $\group_h$, and $|X_\eps'|=(1+\eps)|\optCover_\eps|$.
Hence, $\frac{|X_\eps'\cap \Sets_h|}{|X_\eps'|}=f_h$, so $X_\eps'$ is a solution for the \Gfairprob{} problem and $|\optCover|\leq |X_\eps'|$, by definition. We conclude that $|\optCover|\leq |X_\eps'|= (1+\eps)|\optCover_\eps|$.
 Let $\mathcal{A}$ be an $\alpha$-approximation algorithm for the \Gfairprob{} and $X$ be the output of $\mathcal{A}$. We have,
    $|X|\leq \alpha |\optCover|\leq (1+\eps)\alpha|\optCover_\eps|$.
\end{proof}

As a result, all the approximation algorithms proposed in the previous sections also approximate the $\eps$-\Gfairprob{} problem with asymptotically the same approximation ratio of $O(\ln n)$.

\vspace{-2mm}
\section{Experiments}
\label{sec:experiments}
\begin{figure*}[!tb]
\centering
    \begin{subfigure}[t]{0.32\linewidth}
        \includegraphics[width=.95\linewidth]{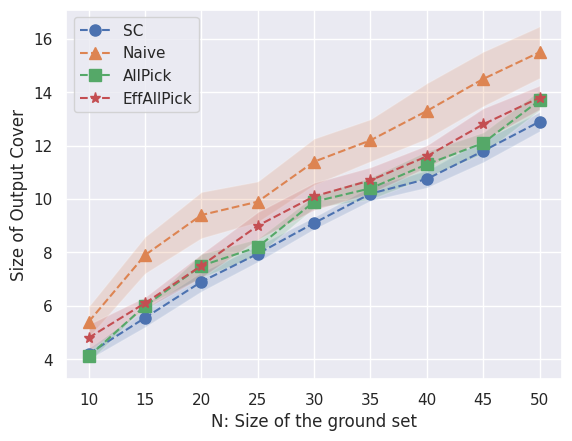}
        \caption{Output size}
        \label{fig:cover_size_resume}
    \end{subfigure}
    \hfill
    \begin{subfigure}[t]{0.32\linewidth}
        \includegraphics[width=.95\linewidth]{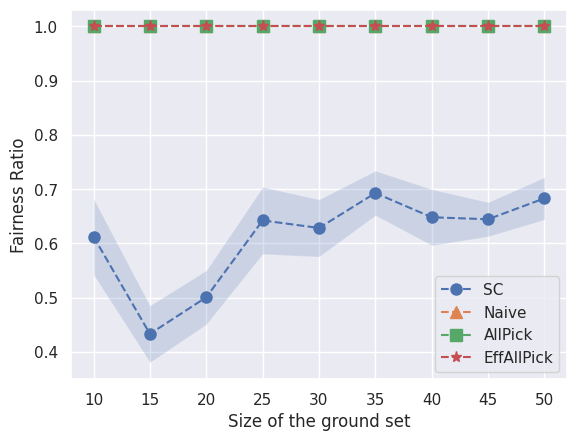}
        \caption{Fairness}
        \label{fig:fairness_resume}
    \end{subfigure}
    \hfill
    \begin{subfigure}[t]{0.32\linewidth}
        \includegraphics[width=.95\linewidth]{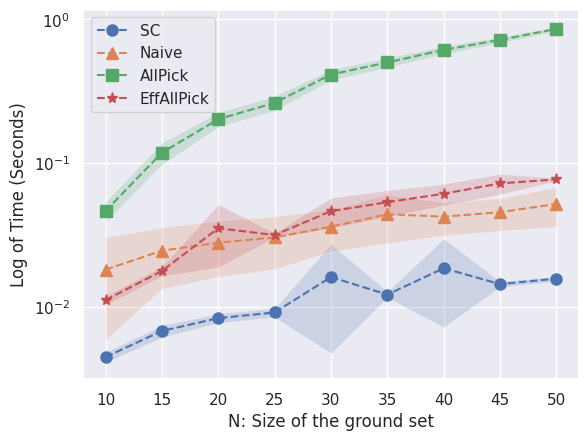}
        \caption{Time}
        \label{fig:running_time_resume}
    \end{subfigure}
\vspace{-4mm}
\caption{Resume Skills: Comparing (a) the size of output cover, (b) fairness (1 is the max fairness), and (c) running time for various ground set sizes $N$. The results are averaged over 20 samples for each $N$.}\label{fig:resume}
\vspace{-3mm}
\end{figure*}

\begin{figure*}
\centering
    \begin{subfigure}[t]{0.32\linewidth}
        \includegraphics[width=.95\linewidth]{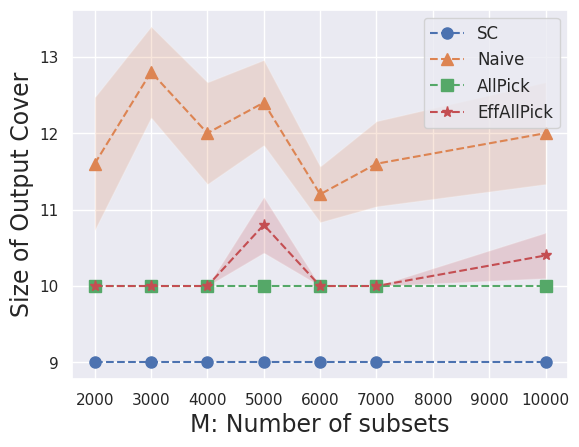}
        \caption{Output size}
    \end{subfigure}
    \hfill
    \begin{subfigure}[t]{0.32\linewidth}
        \includegraphics[width=.95\linewidth]{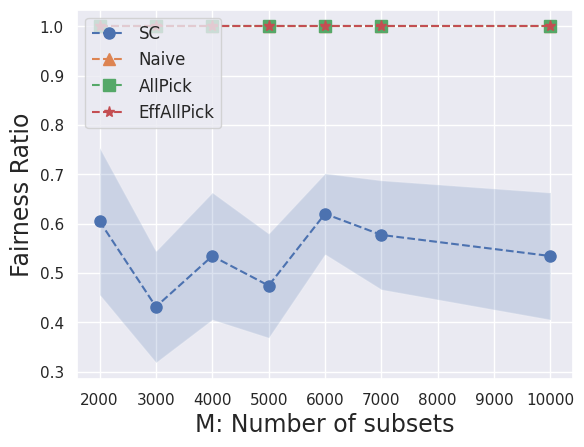}
        \caption{Fairness}
    \end{subfigure}
    \hfill
    \begin{subfigure}[t]{0.32\linewidth}
        \includegraphics[width=.95\linewidth]{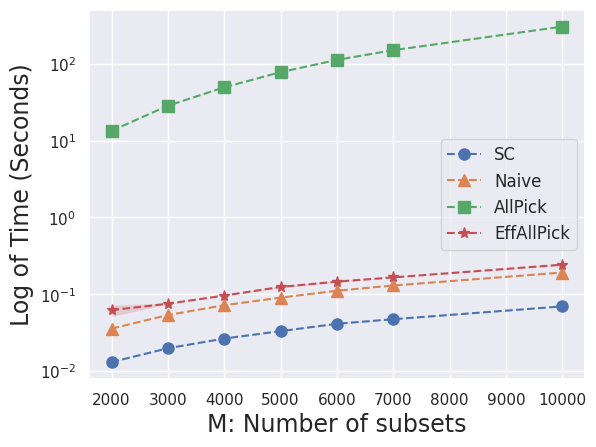}
        \caption{Time}
    \end{subfigure}
\vspace{-4mm}
\caption{Comparing (a) the size of output cover, (b) fairness based on ratio-parity, and (c) running time (averaged over 20 samples) on Adult dataset. The x-axis shows different values of $M$ (size of family of subsets or rows in the sampled dataset).}
\label{fig:exp:adult}
\end{figure*}

\vspace{-1mm}
Having theoretically analyzed the proposed algorithms, in this section we experimentally evaluate the efficiency and efficacy of our algorithms on real and synthetic datasets.

{\em Datasets:} We use 3 real datasets, one semi-synthesized, and one synthesized dataset to evaluate our methods. Our first set of experiments (\S~\ref{sec:exp:validation} and \S~\ref{sec:exp:resume}) are motivated by Example~\ref{ex-1} (Team of Experts Formation), for which we use the {\bf Strategeion Resume Skills} dataset\footnote{\url{http://bit.ly/2SeS4xo}}. It consists of 1986 rows as candidates with different skills. The skill of a candidate is shown as 218 binary columns each corresponding to one skill. As an AI Ethics benchmark, the dataset also contains demographic group information, including the (binary) gender. The male-to-female ratio in this dataset is $0.8$. 

Motivated by the "Fair survey data collection" application demonstrated in \S~\ref{sec:applications}, we use two additional real datasets: {\bf Adult} Income and {\bf COMPAS} (\S~\ref{sec:exp:real}). The {\bf Adult} Income Dataset ~\cite{Dua2019} consists of demographic information for individuals, which was extracted from the 1994 Census database. The dataset is primarily used to predict whether a person earns more than \$50,000 per year based on various attributes (14 columns and 48,842 rows). Gender is the sensitive attribute in this dataset. The {\bf COMPAS} dataset ~\cite{angwin2016compas} contains data used to evaluate the effectiveness and fairness of the COMPAS recidivism prediction algorithm. This dataset includes criminal history, jail and prison time, demographics, and COMPAS risk scores for defendants from Broward County, Florida (52 columns and 18,316 rows), while race is the sensitive attribute.

The third set of experiments (\S~\ref{sec:exp:popsim}) are using the {\bf POPSIM} ~\cite{nguyen2023popsim} dataset. {\bf POPSIM} is a semi-synthetic dataset that combines population statistics along with a geo-database. It is used to represent individual-level data with demographic information for the state of Illinois. Race is used as the sensitive attribute ($5$ groups). It contains 2,000,000 locations represented as points in $\Re^2$.

For the last set of experiments, (\S~\ref{sec:exp:2}), we synthesized a dataset consisting of a number of points as ground set and a family of subsets with different coverage distributions on this ground set.
More details about this dataset is provided in the technical report~\cite{dehghankar2024fair}.

{\em Algorithms:} We compare four algorithms over all the experiments: Standard Greedy Set Cover ({\bf SC}), Naive Algorithm ({\bf Naive}), Fair Greedy Algorithm ({\bf AllPick}), Efficient or Faster Fair Greedy Algorithm ({\bf EffAllPick}).
We also compare these algorithms with the optimum solution (using the brute-force algorithm for Standard Set Cover ({\bf Opt-SC}) and a brute-force algorithm for Fair Set Cover ({\bf Opt-Fair})). These algorithms simply check all possible covers to find the best optimum solution.
For the generalized version of the problem, we compared the General Fair Set Cover ({\bf GFSC}) algorithm and the Efficient or Faster GFSC algorithm ({\bf EffGFSC}) along with {\bf Naive} and Standard Greedy ({\bf SC}) algorithms.

{\em Evaluation metrics:}
We used three metrics for evaluating the algorithms: 
(a) {\bf Output Cover Size}, (b) {\bf Fairness Ratio}, and (c) {\bf Running Time}. 
Following our (general) fairness definition in \S~\ref{sec:pre}, 
let the ratio of each group $\gee_h\in\Gee$ in the cover $\cover$ be $R(\gee_h) = \frac{1}{f_h}\big|\{\cover \cap \Sets_h\}\big|$.
Then, the Fairness ratio is computed as $\frac{\min_{\gee_h\in\Gee} R(\gee_h)}{\max_{\gee_h\in\Gee} R(\gee_h)}$.

This formula for count-parity, for example, computes the least common group size divided by the most common group size in the output cover of an algorithm.
A fair algorithm with zero unfairness should return a fairness ratio of 1, while the smaller the fairness ratio, the more unfair the algorithm.

All implementations have been run on a server with 24 core CPU and 128 GB memory with Ubuntu 18 as the OS. The code is publicly available\footnote{\href{https://github.com/UIC-InDeXLab/fair_set_cover}{https://github.com/UIC-InDeXLab/fair\_set\_cover}}.



 \vspace{-3mm}
\subsection{Validation}\label{sec:exp:validation}
\vspace{-1mm}
We begin our experiments by validating the problem we study.
To do so, we used the Resume Skills dataset. Using all of its 1986 rows, 
we sampled different set of skills $N$ varying from 10 to 20 out of all 218 available skills as the ground set.
We run the brute-force algorithms {\sc opt-SC} and {\sc opt-FSC} to find the optimal solutions for the set cover and the fair set cover, respectively.
Next, we used the greedy algorithm for set cover and EffAllPick to find the approximate fair set cover. The results are shown in the following.

\begin{center}
\small
\begin{tabular}{|@{}c@{}|c@{}|c@{}|}
        \hline
        {\bf Algorithm} & {\bf Avg. Fairness Ratio} & {\bf Avg. Cover Size} \\ [0.5ex] 
        \hline
        {\sc Opt-SC} & \textcolor{red}{0.48} & 3.32 \\
        {\sc Greedy-SC} & \textcolor{red}{0.55} & 3.42 \\
        {\sc Opt-FSC} & {\bf 1.00} & 3.75 \\ 
        {\sc EffAllPick} & {\bf 1.00} & {3.90} \\
        \hline
    \end{tabular}
    \label{tab:validation}
\end{center}
First, we observe that both optimal and approximation solutions for the set cover were significantly unfair. This verifies that without considering fairness, set cover can cause major biases.
On the other hand, formulating the problem as the fair set cover, both the optimal and the approximation algorithms have zero unfairness. Second, comparing the average cover size for {\sc Opt-SC} and {\sc Opt-FSC}, one can verify a negligible price of fairness as the avg. cover size increased by only 0.43 (less than half a set). Last but not least, even though {\sc EffAllPick} has a $\log(n)$ approximation ratio, in practice, its cover size is close to the optimal, since its avg. cover size was only 0.15 sets more than the optimal ({\sc Opt-FSC}).

\begin{figure*}[!t]
\centering
    \begin{subfigure}[t]{0.32\linewidth}
        \includegraphics[width=.95\linewidth]{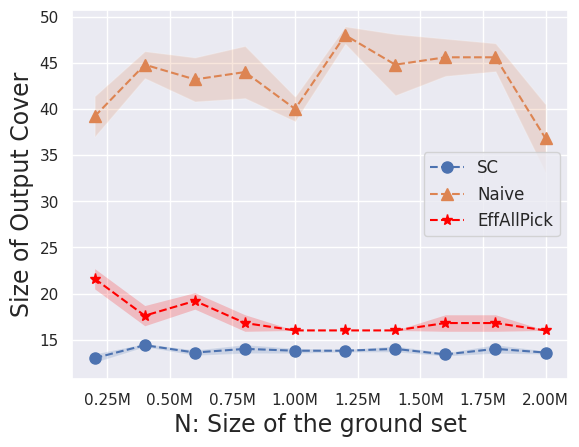}
        \caption{Output size}
    \end{subfigure}
    \hfill
    \begin{subfigure}[t]{0.32\linewidth}
        \includegraphics[width=.95\linewidth]{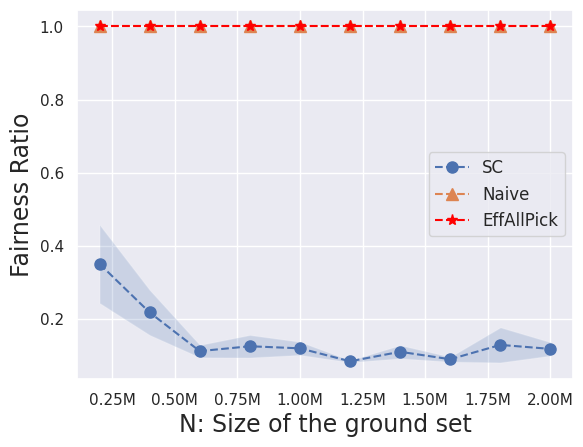}
        \caption{Fairness}
    \end{subfigure}
    \hfill
    \begin{subfigure}[t]{0.32\linewidth}
        \includegraphics[width=.95\linewidth]{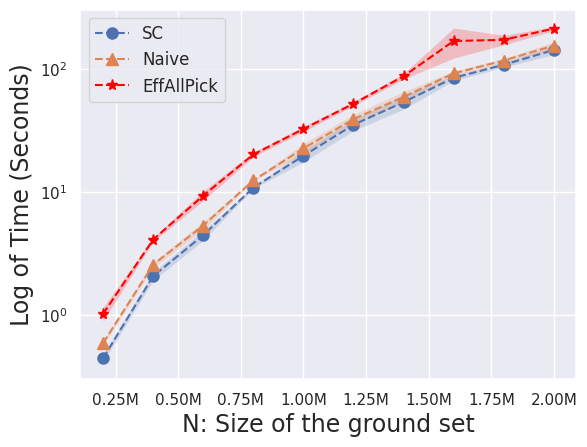}
        \caption{Time}
    \end{subfigure}
 \vspace{-4mm}
\caption{Comparing (a) the size of output cover, (b) fairness based on ratio-parity, and (c) running time (averaged over 5 samples) on POPSIM dataset. The x-axis shows different values of $N$ (size of the ground set).}
\label{fig:exp:popsim:large}
 \vspace{-4mm}
\end{figure*}

\vspace{-3mm}
\subsection{Resume Skills}\label{sec:exp:resume}
In this experiment, we run experiments of team of experts formation, using the Resume Skills dataset on various settings, while using gender to specify the demographic groups {\tt male} and {\tt female}. 
Let $N$ denote the size of the ground set (i.e., the skill sets to cover). For each $N \in \{15, 20, \cdots, 50\}$, we sample $N$ skills 20 times. We ran the experiments on 20 different sample sets (each sample set has a random set of $N$ skills) and averaged the results.
The goal is to find a fair and minimal set of candidates that cover the ground set.
The results, aggregated on all 20 samples, are presented in Fig.~\ref{fig:resume}. The error bounds in all plots show the standard deviation.

\mparagraph{Size of the output cover}
From Fig.~\ref{fig:cover_size_resume}, 
while the output size of the Naive Algorithm is noticeably larger than the greedy set cover (SC), the output sizes for AllPick and EffAllPick algorithms are very close to SC, as the error bars of the three algorithms highly overlap.
Also, it is worth to note that, despite the fact that EffAllPick has a slightly worse approximation ratio than AllPick, in practice, the two algorithms performed near-identical.

\mparagraph{Fairness Ratio} 
As one can see in Fig.~\ref{fig:fairness_resume}, all our algorithms (Naive, AllPick, and EffAllPick) guarantee a fairness ratio of 1, but the standard greedy set cover algorithm returned significantly unfair outputs in all cases, all having a fairness ratio less than 0.8.

\mparagraph{The Running Time}
From Fig.~\ref{fig:running_time_resume}, we can see that the AllPick Algorithm takes a significantly longer time to find a fair cover. 
This is not surprising because its time complexity is exponential to the number of colors.
On the other hand, using the $\mathsf{LP}$ and Randomized Rounding in EffAllPick significantly reduces this running time.
\vspace{-2mm}
\subsection{Fair Survey Collection}\label{sec:exp:real}
In this part, we used two real datasets {\bf Adult} and {\bf COMPAS}. In the {\bf Adult} dataset, the gender is the sensitive attribute. The female-to-male ratio is almost 0.5 in this dataset. We aim to select a subset of individuals (rows) from this dataset to form a set of candidates for a survey that is fair and covers a set of all 30 available criteria. For example, we want to have at least one individual with 'High Income' and at least one with 'Low Income'. Similarly, we want to cover all categories of 'Marital Status' in the final cover. These 30 criteria are our ground-set and the individuals are the family of sets that build the cover. The same procedure is followed for the {\bf COMPAS} dataset. We considered 3 races, "African-American," "Caucasian," and "Hispanic," as the demographic groups (colors) and defined a set of 30 criteria to satisfy in the final cover (i.e., the set of individuals chosen for a survey). We picked a sample of size $M$ from the rows of these datasets. This sample corresponds to a set of individuals covering the ground set (the criteria). In other words, $|\Sets| = M$ where $\Sets$ is the family of subsets defined in the Set Cover problem. We repeated the experiments on different values of $M=\{2000,3000,\ldots, 10000\}$.
The results of these runs are illustrated in Fig ~\ref{fig:exp:adult} and in the technical report~\cite{dehghankar2024fair}. The results are similar to the previous experiment \S~\ref{sec:exp:resume}.
 \vspace{-4mm}
\subsection{POPSIM}\label{sec:exp:popsim}
This dataset contains a set of colored points in $\Re^2$. Our coverage problem on this dataset can be defined as selecting a subset of points (e.g. for business license distribution or survey collection) to cover all the neighbors. Two points are considered neighbors if their distance is less than $r$ for a fixed radius $r$. Based on this definition, each individual is a subset that covers a circle with a radius $r$ of other individuals around it. For example, Fig. 12 in the technical report~\cite{dehghankar2024fair} shows an example of a fair cover and an unfair cover on a sampled subset of this dataset. Each point is an individual, and each circle shows a selected individual and its coverage.

Similar to the previous experiments, we ran our methods on different number of elements (size of ground set). In Fig ~\ref{fig:exp:popsim:large}, we show the output size, fairness and running time of the algorithms over different sizes of the ground set (from 0.25M to 2M). The algorithm AllPick did not finish within $1$ hour so we exclude it from the figure. All observations are the same as in the other datasets: Our EffAillPick algorithm satisfies zero-fairness and it is competitive to SC with respect to the size of the output cover and the running time. The SC algorithm returns highly unfair outputs with fairness ratio less than $0.2$ in most cases. 


 \vspace{-3mm}
\subsection{Extended results}\label{sec:exp:2}
 \vspace{-1mm}
We use our synthetic data 
to verify our methods in different settings like Generalized Fair Set Cover and more demographic groups (colors). 
Due to space limitations, these experiments and the details of our synthetic datasets are provided in the technical report~\cite{dehghankar2024fair}.
\vspace{-3mm}
\section{Related Work}\label{sec:related}
Algorithmic fairness has received a lot of attention in the last few years. 
However, most of the recent work focuses on the intersection of fairness and Machine Learning ~\cite{barocas2023fairness,mehrabi2021survey}. 

Despite its importance, limited work has been done on algorithmic fairness in
the context of combinatorial problems \cite{wang2022balancing}. Satisfying fairness constraints while optimizing sub-modular functions is studied in \cite{wang2022balancing}. Further examples include satisfying fairness in Coverage Maximization problems ~\cite{asudeh2023maximizing,bandyapadhyay2021fair} and Facility Locations~\cite{jung2019center}.
In the red-blue set cover problem~\cite{carr2000red}, the goal is to select a family of sets to cover all red points, while minimizing the number of blue points covered.
In this body of work, the focus is on defining fairness on the point set (ground set). In other words, they assume the points belong to different demographic groups. In contrast, in this work, we define fairness on the family of sets.
Fairness is also studied in problems like Matching \cite{garcia2020fair,esmaeili2023rawlsian}, 
Resource Allocation~\cite{mashiat2022trade}, Ranking~\cite{asudeh2019designing,zehlike2017fa}, Queue Systems~\cite{demers1989analysis}, and Clustering~\cite{makarychev2021approximation,thejaswi2021diversity}. 

There has been recent work studying fairness in the Hitting Set problem~\cite{inamdar2023fixed}. Although hitting set is the dual of the set cover problem, the definition of fairness used is different from ours. 
We define and guarantee perfect fairness based on demographic parity (count and ratio parity). 
In other words, we require an exact equality (e.g., equality of count equality or exact color ratio) between demographic groups.
By contrast, in ~\cite{inamdar2023fixed}, a hitting set is defined to be fair if does not contain {\bf many} points from a color. They have upper bounds on the number of points (equivalent to the subsets in the set cover problem) that can be picked from any particular color.
Furthermore, they design exact (fixed-parameter) algorithms with running time exponential in $k$ and/or exponential on maximum number of sets that an element belongs to.

The Fair Team Formation Problem and the fair set cover problem with two groups are also discussed in ~\cite{barnabo2019algorithms}. The authors form a fair team of experts, covering a specific set of skills and, at the same time, being fair in terms of sensitive attributes. They focus on the binary case where individuals only belong to one of the two demographic groups. They present inapproximability results and some some heuristics. The practical experiments are limited to a dataset of around 6 skills and 1211 individuals.

\section{Conclusion}
Set cover has an extended scope for solving real-world problems with societal impact. In this paper we revisited this problem through the lens of fairness. 
We adopted the group fairness notion of demographic parity, and proposed a general formulation that extends to various cases such as count-parity and ratio-parity for binary and non-binary demographic groups.
We formulated set cover and weighted set cover under fairness constraint and proposed approximation algorithms that (a) always guarantee fairness, (b) have (almost) the same approximation ratio and similar time complexity with the the greedy algorithm for set cover (without fairness).
Our experiments on real and synthetic data across different settings demonstrated the high performance of our algorithms.

\begin{acks}
This work was supported in part by the National Science Foundation, Grant No. 2348919 and 2107290.
\end{acks}

 \newpage
\bibliographystyle{abbrv}
\balance
\bibliography{ref}

\appendix

\vspace{-2mm}
\section{Missing algorithms and proofs from \S~\ref{sec:WFSC}}
\label{appndx:sec5}
\vspace{-1mm}
\subsection{Naive Algorithm}
We first design a naive algorithm for the \fairWprob{}, similar to the naive algorithm we proposed for the \fairprob{}.
We execute the well-known greedy algorithm for the weighted set cover problem in the instance $(\Elements, \Sets)$ without considering the colors of the sets.
Let $C$ be the family of sets returned by the greedy algorithm. Then we add arbitrary sets from each color to equalize the number of sets from each color in the final cover. Let $\Wcover$ be the final set we return.

If $\alpha$ is the approximation factor of the greedy algorithm in the weighted set cover, we show that the naive algorithm returns an $\alpha  k\Wratio$-approximation. Let $C^*$ be the optimum solution for the weighted set instance $(\Elements, \Sets)$ and $\optWCover$ be the optimum solution for the \fairWprob{} in the instance $(\Elements, \Sets, \weight)$. Notice that $\weight(\optWCover)\geq \weight(C^*)$.
In the worst case, $C$ contains all the sets from one of the colors $\group_t$. As a result, for all other colors, we should add $|C|$ arbitrary sets. For each $h \leq k$ and $h \neq t$, let $A_h$ be the family of sets we added to $C$ to form $\Wcover$. Based on the definition, $|A_h| = k$ for every $h=1,\ldots, k$, and $\Wcover = C \cup (\bigcup_{h \leq k, h \neq t} A_j)$. We have:

\vspace{-3mm}


\begin{align*}
    \frac{\weight(\Wcover)}{\weight(\optWCover)} &= \frac{\weight(C \cup (\bigcup_{h \leq k, h \neq t} A_j))}{\weight(\optWCover)} \\
    &= \frac{\weight(C)}{\weight(\optWCover)} + \sum_{h \leq k, h \neq t} \frac{\weight(A_j)}{\weight(\optWCover)}\leq \frac{\weight(C)}{\weight(C^*)} + \sum_{h \leq k, h \neq t} \frac{\weight(A_j)}{\weight(C^*)}\\
    &\leq\!\alpha\!+\! \alpha\!\!\!\!\!\!\!\sum_{h \leq k, h \neq t}\!\!\!\frac{\weight(A_j)}{\weight(C^)}\leq \alpha (1 +\!\!\!\!\!\!\sum_{h \leq k, h \neq t} \!\!\!\!\!\!\Wratio)= \alpha (1 + (k - 1) \Wratio ) = k \Wratio \alpha
\end{align*}

We know that $\alpha=\ln n +1$, so we conclude with Theorem~\ref{thm:NaivefairWCover}.

\vspace{-3mm}
\subsection{Proof of Theorem~\ref{thm:fairWCover}}
\vspace{-1mm}
First, it is straightforward to see that $\Wcover$ is a fair cover. The algorithm stops when there is no uncovered element so $\Wcover$ is a cover. In every iteration, we add exactly one set of color $\group_1$ and one set of color $\group_2$, so $\Wcover$ is a fair cover.

Next, we show that our algorithm computes an $\Wratio(\ln n +1)$-approximation solution for the \fairWprob{} problem.
\begin{lemma}
    \label{lem:fairWapprox}
    $\weight(\Wcover)\leq \Wratio(\ln n +1)\cdot \weight(\optWCover)$.
\end{lemma}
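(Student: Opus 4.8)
\textbf{Proof plan for Lemma~\ref{lem:fairWapprox}.}
The plan is to mirror the unweighted argument in Lemma~\ref{lem:fairapprox}, but track weights instead of cardinalities and pay the price of the weight-ratio $\Wratio$. First I would define, for each pair $\set_A\in\Sets_1$, $\set_B\in\Sets_2$, the combined set $\set_{A,B}=\set_A\cup\set_B$ with an associated weight $\weight(\set_{A,B})=\weight(\set_A)+\weight(\set_B)$, and form the auxiliary weighted set cover instance $(\Elements',\Sets')$ with $\Elements'=\Elements$ and $\Sets'=\{\set_{i,j}\mid \set_i\in\Sets_1,\ \set_j\in\Sets_2\}$. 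Any fair cover of the original instance maps to a cover of $(\Elements',\Sets')$ of exactly the same weight, and conversely; in particular, if $C^*$ is the optimum weighted cover of $(\Elements',\Sets')$ then $\weight(C^*)\leq \weight(\optWCover)$ (the optimal fair cover induces a feasible cover of the auxiliary instance). This replaces the factor-$2$/halving step of the unweighted proof: here there is no loss, since each ``super-set'' already bundles one set of each color.

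Next I would argue that the greedy algorithm described above, which in each iteration picks the color-$\group_1$/color-$\group_2$ pair minimizing $\frac{\weight(\set_A)+\weight(\set_B)}{|(\set_A\cup\set_B)\cap\uncovered|}$, is essentially the standard greedy algorithm for weighted set cover run on $(\Elements',\Sets')$ — it picks, among all available super-sets, one of minimum cost-effectiveness ratio. The one subtlety, exactly as in Lemma~\ref{lem:fairapprox}, is that greedy on $(\Elements',\Sets')$ might want to reuse a set $\set_{A_j}$ or $\set_{B_j}$ whose color-component has already been selected; here I would invoke the standing assumption that every group has $\Omega(\Wratio\log n)\frac{|\optCover|}{k}$ sets, so a fresh set $\set_{B_h}$ (resp. $\set_{A_h}$) of the needed color is always available, and the pair chosen by our greedy is at least as cost-effective as the best super-set usable by the idealized greedy. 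One must also note that the number of greedy iterations is $O(\Wratio\frac{|\optCover|}{k}\log n)$, which is why the "enough sets" bound carries a $\Wratio$ factor relative to the unweighted case; I would state this but it is routine. Hence, by the classical analysis of greedy weighted set cover, the total weight of the super-sets selected is at most $(\ln n+1)\weight(C^*)$.

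The final step is to translate back: the family $\Wcover$ returned by our algorithm consists of the individual sets making up the selected super-sets, so $\weight(\Wcover)=\weight(\text{selected super-sets})\le (\ln n+1)\weight(C^*)\le (\ln n+1)\weight(\optWCover)$. This would already give $(\ln n+1)$, better than claimed — so the $\Wratio$ factor must enter somewhere, and the honest place is the greedy reuse step: when a component color is exhausted in the idealized run, we substitute a different set of the same color, and the substituted set can be heavier by up to a factor $\Wratio$, so the cost-effectiveness guarantee of our actual greedy step degrades by at most $\Wratio$ relative to the idealized greedy choice. Propagating this into Young's analysis~\cite{young2008greedy} of cost-effectiveness-approximate greedy yields the $\Wratio(\ln n+1)$ bound. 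The main obstacle I anticipate is making this substitution argument precise — showing that replacing an unavailable set by an available one of the same color changes the ratio by at most $\Wratio$ while the coverage is at least as large — and confirming that the "enough sets" assumption (with its $\Wratio$ slack) guarantees such a substitute exists in every one of the $O(\Wratio\frac{|\optCover|}{k}\log n)$ iterations.
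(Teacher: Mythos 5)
Your proposal is correct and follows essentially the same route as the paper's proof: build the auxiliary weighted instance of super-sets $(\Elements',\Sets',\weight')$, observe $\weight'(C^*)\le\weight(\optWCover)$, show the pairwise greedy is a $\Wratio$-approximate cost-effectiveness greedy on that instance (with the $\Wratio$ loss arising exactly where an already-used component must be replaced by a fresh, possibly heavier set of the same color), and invoke Young's analysis to get $\Wratio(\ln n+1)$. The substitution detail you flag as the main obstacle is handled in the paper just as you sketch it: the replacement set covers at least as many uncovered elements (the old one contributes nothing new) and its weight exceeds the old one by at most a factor $\Wratio$, which is the only place the weight ratio enters.
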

\begin{proof}
    If $\set_A, \set_B\in \Sets$, let $\set_{A,B}=\set_A\cup \set_B$.
We define the set $\Elements'=\Elements$ and $\Sets'=\{\set_{i,j}\mid \set_i\in \Sets_1, \set_j\in \Sets_2\}$. We also define the weighted function $\weight':\Sets'\rightarrow \Re^+$ such that $\weight'(\set_{A,B})=\weight(\set_A)+\weight(\set_B)$. Let $C^*$ be the optimum solution of the weighted set cover instance $(\Elements', \Sets', \weight')$. Any fair cover returned by our algorithm can be straightforwardly mapped to a valid weighted set cover for $(\Elements', \Sets')$. For example, if our algorithm selects a pair $\set_A, \set_B$ then it always holds that $\set_{A,B}\in \Sets'$, $|(\set_A\cap \set_B)\cap \Elements|=|\set_{A,B}\cap \Elements'|$, and $\weight'(\set_{A,B})=\weight(\set_A)+\weight(\set_B)$.
We have $\weight'(C^*)\leq \weight(\optCover)$.

Recall that the standard greedy algorithm for the weighted set cover problem returns an $(\ln n +1)$-approximation.
We show that our algorithm implements a variation of such a greedy algorithm and returns a $\Wratio(\ln n +1)$-approximation in the weighted set cover instance $(\Elements', \Sets', \weight')$. In particular, we show that in any iteration, our algorithm chooses a pair of sets $\set_A, \set_B$ such that the set $\frac{\weight'(\set_{A,B})}{|\set_{A,B}\cap \uncovered|}$ is a $\Wratio$-approximation of the best ratio among the available sets.

At the beginning of an iteration $i$, assume that our algorithm has selected the pairs $(\set_{A_1}, \set_{B_1}),\ldots, (\set_{A_{i-1}}, \set_{B_{i-1}})$, so the sets 
$\set_{A_1,B_1}, \ldots,\\ \set_{A_{i-1}, B_{i-1}}$ have been selected for the set cover instance $(\Elements', \Sets')$. Let $\uncovered$ be the set of uncovered elements in $\Elements'$ at the beginning of the $i$-th iteration.
Let $\set_{A_j,B_j}$ be the set in $\Sets'$ (at the beginning of the $i$-th iteration) with the minimum $\frac{\weight'(\set_{A_j, B_j})}{|\set_{A_j,B_j}\cap \uncovered|}$. We consider two cases.

If $\set_{A_j}\notin \{\set_{A_1},\ldots, \set_{A_{i-1}}\}$ and $\set_{B_j}\notin \{\set_{B_1},\ldots, \set_{B_{i-1}}\}$, then by definition our greedy algorithm selects the pair $(\set_{A_j}, \set_{B_j})$ so the set $\set_{A_j,B_j}$ is added in the cover $(\Elements', \Sets')$.

Next, consider the second case where $\set_{A_j}\notin \{\set_{A_1},\ldots, \set_{A_{i-1}}\}$ and $\set_{B_j}\in \{\set_{B_1},\ldots, \set_{B_{i-1}}\}$. Since there are enough sets from each group, let $S_{B_h}$ be a set with color $\group_2$ such that $\set_{B_h}\notin \{\set_{B_1},\ldots, \set_{B_{i-1}}\}$.
By definition, we have $|(\set_{A_j}\cup \set_{B_h})\cap \uncovered|\geq |(\set_{A_j}\cup \set_{B_j})\cap \uncovered|$.
Hence, it holds that $\weight(\set_{B_h})\geq \weight(\set_{B_j})$, otherwise we would have $\frac{\weight'(\set_{A_j,B_h})}{|\set_{A_j,B_h}\cap \uncovered|}< \frac{\weight'(\set_{A_j,B_j})}{|\set_{A_j,B_j}\cap \uncovered|}$, which is contradiction.
By definition it holds that $\Wratio\geq \frac{\weight(s_{B_h})}{\weight(s_{B_j})}$.
Our algorithm selects the pair such that $\frac{\weight(\set_{A^*})+\weight(\set_{B^*})}{|(\set_{A^*}\cup\set_{B^*})\cap \uncovered|}$ is minimized.
Putting everything together, we have, 
\begin{align*}
&\frac{\weight'(\set_{A^*,B^*})}{|\set_{A^*, B^*}\cap \uncovered|}=\frac{\weight(\set_{A^*})+\weight(\set_{B^*})}{|(\set_{A^*}\cup\set_{B^*})\cap \uncovered|}\leq \frac{\weight(\set_{A_j})+\weight(\set_{B_h})}{|(\set_{A_j}\cup \set_{B_h})\cap \uncovered|}\\&\leq\!\!\! \frac{\weight(\set_{A_j})+\weight(\set_{B_h})}{|(\set_{A_j}\cup \set_{B_j})\cap \uncovered|}\!\!\leq\!\! \frac{\weight(\set_{A_j})+\Wratio\cdot\weight(\set_{B_j})}{|(\set_{A_j}\cup \set_{B_j})\cap \uncovered|}\!\!\leq\!\! \Wratio\frac{\weight(\set_{A_j})+\weight(\set_{B_j})}{|(\set_{A_j}\cup \set_{B_j})\cap \uncovered|}.
\end{align*}
It is known~\cite{young2008greedy} that if a greedy algorithm for the weighted set cover computes a $\beta$-approximation of the best ratio in each iteration, then the algorithm returns a $\beta(\ln n +1)$-approximation.
Hence, $\weight(\Wcover)\leq \Wratio(\ln n +1)\weight'(C^*)\leq \Wratio(\ln n+1)\weight(\optWCover)\Leftrightarrow \weight(\Wcover)\leq \Wratio(\ln n +1)\weight(\optWCover)$.
\end{proof}

At the beginning of the algorithm, for every pair of sets we compute the number of elements they cover in $O(m^2n)$ time. Each time we cover a new element we update the counters in $O(m^2)$ time. In total, our algorithm runs in $O(m^2n)$ time.

Our algorithm can be extended to $k$ groups, straightforwardly. In each iteration, we find the $k$ sets $\set_{A_1}\in \Sets_1, \ldots, \set_{A_k}\in \Sets_k$ such that the ratio $\frac{\sum_{1\leq i\leq k}\weight(\set_{A_i})}{|\left(\bigcup_{1\leq i\leq k}S_{A_i}\right)\cap \uncovered|}$ is minimized. The running time increases to $O(m^kn)$.

\vspace{-3mm}
\subsection{Proof of Theorem~\ref{thm:WeightfairMaxKCover}}
\vspace{-1mm}
First, by definition, $\kcover^{(\tau^*)}$ contains exactly one set from every group $\group_i$. Hence $\kcover^{(\tau^*)}$ always satisfies the fairness requirement. Next, we show the approximation factor and the running time.

We introduce some useful notation. Assume that the optimum solution for the \wkmc instance $(\Elements^-, \Sets^-, \weight)$ has value $\mathsf{OPT}=\frac{\mathsf{OPT_\weight}}{\tau'}$ be the optimum solution covers $\tau'$ elements in $\Elements^-$ and the sum of weights of all sets in the optimum solution is $\mathsf{OPT}_\weight$. For any $\tau\in[1,n]$, let $\mathsf{OPT}_{\mathsf{LP}}^{(\tau)}=\sum_{\set_i\in\Sets^-}\weight(\set_i)x_i^{(\tau)}$, i.e., the optimum solution of the $\mathsf{LP}$ with parameter $\tau$. Finally, for $\tau=1,\ldots, n$, let $Z_i^{(\tau)}$ be the random variable which is $1$ if $\set_i$ is selected in the cover, otherwise it is $0$.
\vspace{-2mm}
\begin{lemma}
    The expected weight of the family of sets $\kcover^{(\tau)}$ is $\mathsf{OPT}_{\mathsf{LP}}^{(\tau)}$.
\end{lemma}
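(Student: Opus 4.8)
The plan is to compute the expectation directly, reducing it to the marginal probability that each individual set is sampled. First I would observe that, for every group $\group_h \in \Groups$, the numbers $\{x_i^{(\tau)} \mid \set_i \in \Sets_h^-\}$ form a genuine probability distribution over the sets of that group: this is exactly the LP constraint $\sum_{i: \set_i \in \Sets_h^-} x_i^{(\tau)} = 1$ together with $x_i^{(\tau)} \in [0,1]$. Since the sampling step of the algorithm draws exactly one set from $\Sets_h^-$ according to these probabilities, the marginal probability that a fixed set $\set_i$ ends up in $\kcover^{(\tau)}$ is precisely $x_i^{(\tau)}$; that is, $\Pr[Z_i^{(\tau)} = 1] = x_i^{(\tau)}$ for every $\set_i \in \Sets^-$.

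Then I would write $\weight(\kcover^{(\tau)}) = \sum_{\set_i \in \Sets^-} \weight(\set_i) Z_i^{(\tau)}$ and apply linearity of expectation: $E[\weight(\kcover^{(\tau)})] = \sum_{\set_i \in \Sets^-} \weight(\set_i)\,\Pr[Z_i^{(\tau)} = 1] = \sum_{\set_i \in \Sets^-} \weight(\set_i) x_i^{(\tau)}$, which equals $\mathsf{OPT}_{\mathsf{LP}}^{(\tau)}$ by definition. Note that the sets chosen for different groups are drawn independently, but independence is not actually needed here — linearity of expectation applies regardless — so the identity is robust to any correlation structure across groups.

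The one point worth flagging is the reading of the statement: $\kcover^{(\tau)}$ refers to the raw sampled family of $k$ sets, i.e., before the rejection test $|\kcover^{(\tau)} \cap \Elements^-| \geq \frac{1}{2}(1 - \frac{1}{e})\tau$ is applied, so no conditioning on the success event enters the computation and the equality is exact (not merely an equality for a conditioned distribution). The effect of re-sampling on the number of covered elements and on the cost is handled by the later lemmas in the proof of Theorem~\ref{thm:WeightfairMaxKCover}, not here. I do not expect any real obstacle; the only mild subtlety is the justification that the $x_i^{(\tau)}$'s constitute an honest per-group probability distribution, which is immediate from feasibility of the LP solution.
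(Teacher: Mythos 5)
Your proof is correct and follows essentially the same route as the paper's: write the weight as $\sum_i \weight(\set_i) Z_i^{(\tau)}$, use $E[Z_i^{(\tau)}]=x_i^{(\tau)}$ (justified by the per-group LP constraint $\sum_{i:\set_i\in\Sets^-_h}x_i^{(\tau)}=1$), and apply linearity of expectation. Your extra remarks — that independence across groups is not needed and that the expectation is taken over the raw sample before the rejection test — are accurate clarifications of points the paper leaves implicit.
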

\begin{proof}
    The expected weight of $\kcover$ is $E[\sum_{i}Z_i^{(\tau)}\weight(\set_i)]=\sum_i \weight(\set_i)E[Z_i^{(\tau)}]=\sum_{\group_h\in \Groups}\sum_{\set_i\in \Sets^-_h}\weight(\set_i)E[Z_i^{(\tau)}]=\sum_{\group_h\in \Groups}\sum_{\set_i\in \Sets^-_h}\weight(\set_i)x_i^{(\tau)}=\mathsf{OPT}_{\mathsf{LP}}^{(\tau)}$.
\end{proof}
\vspace{-2mm}
\begin{lemma}
    Our randomized algorithm returns an $\frac{e}{e-1}$-approximation for the \wkmc problem. The approximation factor holds in expectation.
\end{lemma}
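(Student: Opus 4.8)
The plan is to analyze the single iteration with the threshold $\tau$ set to $\tau'$, the number of elements that an optimal \wkmc solution covers (that solution has total weight $\mathsf{OPT}_\weight$, so $\mathsf{OPT}=\mathsf{OPT}_\weight/\tau'$), and then observe that choosing $\tau^{*}$ by the $\argmin_{\tau}$ can only help. First I would note that the optimal integral solution is a feasible point of the $\mathsf{LP}$ with parameter $\tau'$: it picks exactly one set per color, it covers at least $\tau'$ elements of $\Elements^-$, and its objective value is $\mathsf{OPT}_\weight$; hence $\mathsf{OPT}_{\mathsf{LP}}^{(\tau')}\le \mathsf{OPT}_\weight$. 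Next I would transfer the coverage analysis of the unweighted case essentially verbatim: exactly as in Lemma~\ref{lem:lem1} and Lemma~\ref{lem:lem2}, using $1+x\le e^{x}$ together with the constraints $\sum_{i:\element_j\in\set_i}x_i^{(\tau)}\ge y_j^{(\tau)}$ and $\sum_j y_j^{(\tau)}\ge \tau$, a single (un-conditioned) sample $\kcover^{(\tau)}$ covers at least $(1-\tfrac1e)\tau$ elements of $\Elements^-$ in expectation, while by the preceding lemma its expected weight equals $\mathsf{OPT}_{\mathsf{LP}}^{(\tau)}$. This is the source of the $\tfrac{e}{e-1}$ factor: a fresh sample for $\tau=\tau'$ has expected weight at most $\mathsf{OPT}_\weight$ and expected coverage at least $(1-\tfrac1e)\tau'$, i.e.\ its weight-per-covered-element is off from $\mathsf{OPT}$ only by $1/(1-\tfrac1e)$.

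The role of the re-sampling step and of the quantities $r^{(\tau)}$ is to convert this in-expectation statement into a genuine bound on the family that is returned. The accepted sample $C^{(\tau)}$ is a fresh sample conditioned on the event $E_\tau=\{\,|\kcover^{(\tau)}\cap\Elements^-|\ge\tfrac12(1-\tfrac1e)\tau\,\}$, and conditioning inflates the expected weight by a factor of at most $1/Pr[E_\tau]$; I would lower-bound $Pr[E_\tau]$ by an absolute constant through a reverse-Markov estimate on the nonnegative integer $|\kcover^{(\tau)}\cap\Elements^-|$, whose mean is at least $(1-\tfrac1e)\tau$ and which is bounded above by the number of fractionally covered elements — the same estimate bounds the expected number of re-samples, and hence the claimed expected running time. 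Since $|C^{(\tau)}\cap\Elements^-|\ge\tfrac12(1-\tfrac1e)\tau$ holds deterministically and $\mathsf{OPT}_{\mathsf{LP}}^{(\tau)}$ is the exact $\mathsf{LP}$ optimum, $r^{(\tau')}$ is bounded in terms of $\mathsf{OPT}$; because $\tau^{*}=\argmin_{\tau}r^{(\tau)}$ we get $r^{(\tau^{*})}\le r^{(\tau')}$, and feeding this back into the identity $E[\weight(\kcover^{(\tau)})]=\mathsf{OPT}_{\mathsf{LP}}^{(\tau)}$ bounds the expected ratio of the returned family by $\tfrac{e}{e-1}\,\mathsf{OPT}$.

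The main obstacle is exactly this last transfer. The weight and the coverage of a sampled family are correlated, so one cannot simply replace $E\big[\,\weight(C^{(\tau^{*})})/|C^{(\tau^{*})}\cap\Elements^-|\,\big]$ by a ratio of expectations; the delicate part of the write-up is to arrange the interaction of the $\argmin$ over $\tau$, the re-sampling threshold, and the identity $E[\weight(\kcover^{(\tau)})]=\mathsf{OPT}_{\mathsf{LP}}^{(\tau)}$ so that the only loss surviving in the approximation ratio is the coverage factor $1/(1-\tfrac1e)=\tfrac{e}{e-1}$, with all of the rejection-sampling overhead charged to the running time instead.
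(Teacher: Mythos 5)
Your overall skeleton is the same as the paper's: feasibility of the integral optimum in the $\mathsf{LP}$ with $\tau=\tau'$ gives $\mathsf{OPT}_{\mathsf{LP}}^{(\tau')}\le\mathsf{OPT}_\weight$, the expected weight of a sampled family equals $\mathsf{OPT}_{\mathsf{LP}}^{(\tau)}$, the acceptance test gives a lower bound on the coverage of the returned family, and the $\argmin$ over $\tau$ reduces everything to $\tau=\tau'$. The paper's proof is literally the chain
$\frac{E[\sum_i Z_i^{(\tau^*)}\weight(\set_i)]}{|\kcover^{(\tau^*)}\cap\Elements^-|}=\frac{\mathsf{OPT}_{\mathsf{LP}}^{(\tau^*)}}{|\kcover^{(\tau^*)}\cap\Elements^-|}\le\frac{\mathsf{OPT}_{\mathsf{LP}}^{(\tau')}}{|\kcover^{(\tau')}\cap\Elements^-|}\le\frac{\mathsf{OPT}_\weight}{(1-\frac{1}{e})\tau'}=\frac{e}{e-1}\mathsf{OPT}$.
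A structural point you should note: the quantity $r^{(\tau)}$ that the algorithm minimizes carries the \emph{$\mathsf{LP}$ value} $\mathsf{OPT}_{\mathsf{LP}}^{(\tau)}$ in its numerator, not the realized weight of the sample, so the $\argmin$ step is deterministic; expectation enters only through the identity $E[\weight(\kcover^{(\tau)})]=\mathsf{OPT}_{\mathsf{LP}}^{(\tau)}$ applied to the returned family, paired with the deterministic coverage guarantee from the acceptance test.

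The genuine gap is in your plan for the step you yourself flag as delicate. You propose to absorb the effect of conditioning on the acceptance event $E_\tau$ into a factor $1/Pr[E_\tau]$ on the expected weight, and to lower-bound $Pr[E_\tau]$ by an absolute constant via reverse Markov. That constant is not available: the coverage variable has mean at least $(1-\frac{1}{e})\tau$ but is bounded above only by $|\Elements^-|$ (the number of fractionally covered elements can be all of $\Elements^-$ even when $\tau$ is small, since the $\mathsf{LP}$ may spread tiny mass everywhere), so reverse Markov yields only $Pr[E_\tau]\ge\frac{1-1/e}{2n}$ --- this is exactly the paper's Lemma~\ref{lem:technPr}, and the paper uses it solely to bound the expected number of re-samples, i.e., the running time. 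Charging $1/Pr[E_\tau]$ to the weight would cost a factor of order $n/\tau$, not a constant, so your route to the approximation ratio collapses at precisely the point where you need it. The paper instead never conditions the weight: it uses the unconditioned identity for $E[\weight(\kcover^{(\tau)})]$ together with the deterministic denominator bound (whether that fully disposes of the correlation you correctly point out is debatable, but it is the argument as written). A further constant-factor wrinkle: the acceptance threshold is $\frac{1}{2}(1-\frac{1}{e})\tau$, so the deterministic coverage bound literally yields $\frac{2e}{e-1}$; the paper's displayed chain uses $(1-\frac{1}{e})\tau'$ without the $\frac{1}{2}$.
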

\begin{proof}
\vspace{-2mm}    $\frac{E[\sum_{i}Z_i^{(\tau^*)}\weight(\set_i)]}{|\kcover^{(\tau^*)}\cap \Elements^-|}=\frac{\mathsf{OPT}_{\mathsf{LP}}^{(\tau^*)}}{|\kcover^{(\tau^*)}\cap \Elements^-|}\leq \frac{\mathsf{OPT}_{\mathsf{LP}}^{(\tau')}}{|\kcover^{(\tau')}\cap \Elements^-|}\leq \frac{\mathsf{OPT}_{\mathsf{LP}}^{(\tau')}}{(1-\frac{1}{e})\tau'}\leq \frac{\mathsf{OPT}_\weight}{(1-\frac{1}{e})\tau'}=\frac{e}{e-1}\mathsf{OPT}$.
\end{proof}

Next, we focus on the running time analysis. In total we need to solve $n$ linear programs with $n+k$ variables and $2n+2\mu+1$ constraints, so the running time to execute all the linear programs is $O(n\mathcal{L}(n+k,2n+2\mu+1))$.
For each $\kcover^{(\tau)}$ we sample, we compute $|\kcover^{(\tau)}\cap \Elements^-|$ in $O(kn)$ time.

It remains to bound the number of times we sample for every $\tau$ to get $|\kcover^{(\tau)}\cap \Elements^-|\geq \frac{1}{2}(1-1/e)\tau$.
Let $M_j^{(\tau)}$ be a random variable which is $1$ if $\element_j$ is covered, and $0$ otherwise, if sampling according to $x_i^{(\tau)}$ is applied.
\vspace{-2mm}
\begin{lemma}
    \label{lem:technPr}
    $Pr[\sum_{j}M_j^{(\tau)}\geq \frac{1}{2}(1-\frac{1}{e})\tau]\geq \frac{1-1/e}{2n}$.
\end{lemma}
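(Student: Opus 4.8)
The plan is to combine the LP feasibility constraint $\sum_j y_j^{(\tau)}\ge \tau$ with a reverse‑Markov (averaging) argument. First I would observe that, by exactly the calculation in the proof of Lemma~\ref{lem:lem1} applied to the LP with parameter $\tau$, sampling one set per color according to the probabilities $x_i^{(\tau)}$ covers each element $\element_j\in\Elements^-$ with probability $Pr[M_j^{(\tau)}=1]\ge (1-\tfrac1e)\,y_j^{(\tau)}$; this uses only the inequality $1+x\le e^x$ together with the LP constraints $\sum_{i:\element_j\in\set_i}x_i^{(\tau)}\ge y_j^{(\tau)}$ and $\sum_{i:\set_i\in\Sets^-_h}x_i^{(\tau)}=1$. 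Summing over $j$ and invoking $\sum_j y_j^{(\tau)}\ge\tau$ gives $E\big[\sum_j M_j^{(\tau)}\big]\ge (1-\tfrac1e)\tau$.

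Next, set $W=\sum_j M_j^{(\tau)}$ and $\mu_0=(1-\tfrac1e)\tau$, so that $E[W]\ge\mu_0$ while $0\le W\le n$ holds deterministically, since there are at most $n$ elements in $\Elements^-$. Splitting the expectation on the event $\{W\ge\mu_0/2\}$,
$$E[W]\;\le\;\tfrac{\mu_0}{2}\,Pr[W<\mu_0/2]\;+\;n\,Pr[W\ge\mu_0/2]\;\le\;\tfrac{\mu_0}{2}\;+\;n\,Pr[W\ge\mu_0/2].$$
Rearranging yields $Pr[W\ge\mu_0/2]\ge \frac{E[W]-\mu_0/2}{n}\ge \frac{\mu_0/2}{n}=\frac{(1-1/e)\tau}{2n}$, and since $\tau\ge 1$ this is at least $\frac{1-1/e}{2n}$, which is the claimed bound on $Pr\big[\sum_j M_j^{(\tau)}\ge \frac12(1-\frac1e)\tau\big]$.

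I do not expect a genuine obstacle here — it is a short second‑moment‑free averaging argument — but the point to be careful about is that the deterministic bound $W\le n$ is exactly what makes the reverse‑Markov step go through (a direct application of Markov's inequality would bound the probability from the wrong side), and that the factor $\tau$ is only dropped at the very end using $\tau\ge1$. It is worth noting alongside the lemma that, since the success probability is $\Omega(1/n)$, in expectation $O\!\big(n/(1-1/e)\big)=O(n)$ independent resamples suffice before the drawn family $\kcover^{(\tau)}$ satisfies $|\kcover^{(\tau)}\cap\Elements^-|\ge\frac12(1-\frac1e)\tau$; each check costs $O(kn)$ time, and repeating over all $\tau=1,\dots,n$ accounts precisely for the $O(kn^3)$ term in the running time of Theorem~\ref{thm:WeightfairMaxKCover}.
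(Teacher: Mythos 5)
Your proposal is correct and follows essentially the same route as the paper: it first lower-bounds $E\big[\sum_j M_j^{(\tau)}\big]$ by $(1-\frac{1}{e})\tau$ via the coverage probability from Lemma~\ref{lem:lem1} and the LP constraint $\sum_j y_j^{(\tau)}\geq\tau$, and then applies a reverse-Markov argument using the deterministic bound $W\leq n$, finishing with $\tau\geq 1$. The only cosmetic difference is that you derive the reverse-Markov step from a splitting of the expectation (yielding denominator $n$ rather than the paper's marginally sharper $n-\frac{1}{2}(1-\frac{1}{e})\tau$), which still gives the claimed $\frac{1-1/e}{2n}$.
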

\begin{proof}
\vspace{-2mm}
    We first compute the expected value $E[\sum_{j}M_j^{(\tau)}]$.
    As we had in the proof of Lemma~\ref{lem:lem1}, $Pr[\element_j \text{ is covered after sampling }]\geq (1-1/e)y_j^{(\tau)}$. We have $E[\sum_{\element_j\in \Elements^-}M_j^{(\tau)}]=\sum_{\element_j\in \Elements^-}E[M_j^{(\tau)}]\geq \sum_{\element_j\in \Elements^-}(1-1/e)y_j^{(\tau)}\geq (1-\frac{1}{e})\tau$.

    Let $M^{(\tau)}=\sum_{j}M_j^{(\tau)}$. Using the reverse Markov inequality (for a random variable $V$ such that $Pr[V\leq a]=1$ for a number $a\in\Re$, then for $b<E[V]$, it holds $Pr[V>b]\geq \frac{E[V]-b}{a-b}$), we have $Pr[M^{(\tau)}>\frac{1}{2}(1-1/e)\tau]\geq \frac{E[M^{(\tau)}]-\frac{1}{2}(1-1/e)\tau}{n-\frac{1}{2}(1-1/e)\tau}\\\geq \frac{(1-1/e)\tau-\frac{1}{2}(1-1/e)\tau}{n-\frac{1}{2}(1-1/e)\tau}=\frac{\frac{1}{2}(1-1/e)\tau}{n-\frac{1}{2}(1-1/e)\tau}\geq \frac{1-1/e}{2n}$.
\end{proof}

Notice that $Pr[\sum_{j}M_j^{(\tau)}\geq \frac{1}{2}(1-\frac{1}{e})\tau]$ can be seen as a probability of success in a geometric distribution. In expectation we need $\frac{1}{Pr[\sum_{j}M_j^{(\tau)}\geq \frac{1}{2}(1-\frac{1}{e})\tau]}=O(n)$ trials to get a family of sets $\kcover^{(\tau)}$ that cover at least $\frac{1}{2}(1-\frac{1}{e})\tau$ elements in $\Elements^-$. Hence, we need to repeat the sampling procedure $O(n)$ times in expectation. Each time that we sample a family of $k$ sets $\kcover^{(\tau)}$, we spend $O(kn)$ time to compute $|\kcover^{(\tau)}\cap \Elements^-|$, so for each $\tau$ we spend $O(kn^2)$ expected time in addition to solving the \textsf{LP}. In total, our algorithm runs in $O(n\mathcal{L}(n+k,2n+2\mu+1)+kn^3)$ expected time.

\vspace{-2mm}
\section{Missing algorithms and proofs from \S~\ref{sec:general}}
\label{appndx:Sec6}
\vspace{-1mm}

As a pre-processing step, our algorithm computes $\newsizeGroup_h$ for each $\group_h$ as follows. Recall that each $\fractionGroup_h$ is rational. Without loss of generality, we can assume that each $\fractionGroup_h$ is represented in its simplest form. 
Let $p$ represent the least common multiple (LCM) of the denominators within the set of fractions ${\fractionGroup_1, \ldots, \fractionGroup_h}$. Using standard techniques, $p$ can be computed efficiently. Set $p_h=\fractionGroup_h\cdot p$ for each group $g_h\in\Groups$. The performance of our algorithm will have a dependence on $p$. Hence, it is important to observe that $p$ cannot exceed $|\Sets|=\mu$. This is because in order to satisfy the given fractional constraints, the size of any cover must be at least $p$. However, if $p>\mu$, then this implies the given instance cannot admit such a fractional cover.

In this section, we need the assumption that every group $\group_h\in\Groups$ contain at least $\Omega(\log n)\fractionGroup_h|\optCover|$ sets. For simplicity, we also assume that every group contains the same number of sets, denoted by $m$.
 \vspace{-3mm}
\subsection{Greedy Algorithm}
\vspace{-1mm}
The greedy algorithm for the generalized \fairprob{} is similar to the greedy algorithm from \S~\ref{subsec:greedy}. Instead of choosing one set from each color, in each iteration, we choose a family $C$ of $\newsize$ sets such that i) $C$ contains $\newsizeGroup_h$ sets from group $\group_h$, and ii) $C$ covers the most uncovered elements in $\Elements$.
The proof of Lemma~\ref{lem:fairapprox} can be applied straightforwardly to the generalized \fairprob{}, getting the same approximation guarantee. In each iteration we visit $O(\prod_{h}m^{\newsizeGroup_h})=O(m^{\newsize})$ families of $\newsize$ sets. We conclude to the next theorem.
\vspace{-1mm}
\begin{theorem}
    There exists an $(\ln n + 1)$-approximation algorithm for the Generalized \fairprob{} problem that runs in $O(m^{\newsize} n)$ time.
\end{theorem}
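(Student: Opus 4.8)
The plan is to show that the modified greedy algorithm sketched above --- which in each iteration selects a family $C$ of $\newsize$ sets containing exactly $\newsizeGroup_h$ sets of color $\group_h$ and covering the maximum number of uncovered elements in $\Elements$ --- is an $(\ln n+1)$-approximation, by lifting the analysis of Lemma~\ref{lem:fairapprox} from one set per color to $\newsizeGroup_h$ sets per color. Fairness is immediate: if the algorithm terminates after $t$ iterations, the returned cover $\cover$ contains exactly $t\cdot \newsizeGroup_h$ sets of color $\group_h$ and $t\cdot\newsize$ sets in total, so $|\cover\cap\Sets_h| = \frac{\newsizeGroup_h}{\newsize}|\cover| = \fractionGroup_h|\cover|$ for every $\group_h\in\Groups$.

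For the approximation bound I would reuse the reduction from the proof of Lemma~\ref{lem:fairapprox}. Define the (standard) set cover instance $(\Elements',\Sets')$ with $\Elements'=\Elements$ and $\Sets'=\{\bigcup_{\set\in C}\set \mid C \text{ a valid } \newsize\text{-family, i.e. } |C\cap\Sets_h|=\newsizeGroup_h \text{ for all } h\}$. A key step is to observe that $|\optCover|$ must be a multiple of $\newsize$: since $\fractionGroup_h|\optCover| = |\optCover\cap\Sets_h|$ is an integer for every $h$ and $\newsize$ is the least common multiple of the denominators of the $\fractionGroup_h$'s, we get $\newsize \mid |\optCover|$. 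Hence $\optCover$ partitions into $|\optCover|/\newsize$ valid $\newsize$-families (distribute the $\fractionGroup_h|\optCover| = \newsizeGroup_h\cdot(|\optCover|/\newsize)$ sets of color $\group_h$ evenly, $\newsizeGroup_h$ per family), so the optimum cover $C^*$ of $(\Elements',\Sets')$ satisfies $|C^*|\le |\optCover|/\newsize$.

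Next I would argue, exactly as in Lemma~\ref{lem:fairapprox}, that our algorithm implements the standard greedy algorithm on $(\Elements',\Sets')$. The only point to check is that when the globally best $\newsize$-family for the current uncovered set $\uncovered$ reuses a set selected in an earlier iteration, replacing that stale set by a fresh set of the same color does not decrease coverage over $\uncovered$ --- which holds because any previously selected set is disjoint from $\uncovered$. A fresh substitute of any color is always available by our assumption that each group has $\Omega(\log n)\fractionGroup_h|\optCover|$ sets: the algorithm runs for $O(\tfrac{|\optCover|}{\newsize}\log n)$ iterations, each consuming $\newsizeGroup_h$ sets of color $\group_h$, i.e. $O(\fractionGroup_h|\optCover|\log n)$ sets of color $\group_h$ overall (a standard bootstrapping, since this bound itself follows from the greedy guarantee being in force). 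Therefore the greedy bound gives $|C|\le (\ln n+1)|C^*|$ for the family $C$ returned on $(\Elements',\Sets')$, and since each element of $C$ corresponds to $\newsize$ sets of $\cover$, we conclude $|\cover| = \newsize|C| \le \newsize(\ln n+1)|C^*| \le (\ln n+1)|\optCover|$.

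Finally, for the running time: there are $O\big(\prod_h m^{\newsizeGroup_h}\big) = O(m^{\newsize})$ valid $\newsize$-families, precomputing the coverage of each costs $O(m^{\newsize}n)$, and each of the at most $n$ times an element becomes covered we update these counts in $O(m^{\newsize})$ time, mirroring the bookkeeping behind Theorem~\ref{thm:fairCover}, for a total of $O(m^{\newsize}n)$. I expect the main obstacle to be the substitution/``enough sets'' argument of the third paragraph: one must confirm that the number of sets of each color consumed never exceeds the assumed lower bound so that a fresh replacement is always present and our algorithm genuinely realizes greedy on $(\Elements',\Sets')$; the remaining ingredients (fairness, the divisibility of $|\optCover|$ by $\newsize$, and the time bound) are routine extensions of the one-per-color case.
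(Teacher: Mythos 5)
Your proposal is correct and follows essentially the same route as the paper: the paper's proof simply states that the greedy algorithm now picks a best family of $\newsize$ sets (with $\newsizeGroup_h$ per color) in each of $O(m^{\newsize})$ candidate families, and that the proof of Lemma~\ref{lem:fairapprox} applies verbatim to yield the $(\ln n+1)$ factor. You have merely made explicit the details the paper leaves implicit --- the divisibility of $|\optCover|$ by $\newsize$ giving $|C^*|\le|\optCover|/\newsize$, the substitution argument, and the bookkeeping for the time bound --- all of which check out.
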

\vspace{-3mm}
\subsection{Faster Algorithm}
\vspace{-1mm}
In order to make the above algorithm faster, we use a generalized version of the randomized rounding introduced for the \fairprob{}. We define the next Integer Program. The only difference with the IP in \S~\ref{subsec:fasterAlg} is that for every group $\group_h$, we should satisfy $\sum_{i:\set_i\in\Sets^-_h}x_i=\newsizeGroup_h$ instead of $\sum_{i:\set_i\in\Sets^-_h}x_i=1$.
\vspace{-1mm}
\begin{align}
\max \sum_j y_j&\\
\sum_{i: \set_i\in\Sets^-_h}x_i&=\newsizeGroup_h,\quad \forall \group_h\in \Groups\\
\sum_{i: \element_j\in \set_i} x_i&\geq y_j, \quad \forall \element_j\in \Elements^-\\
x_i&\in \{0,1\},\quad\forall \set_i\in\Sets^-\\
y_j&\in \{0,1\},\quad\forall \element_j\in \Elements^-
\end{align}
We then replace the integer variables to continuous, as following, and use a polynomial time \textsf{LP} solver to compute the solution of the \textsf{LP} relaxation.

\vspace{-9mm}
\begin{align}
x_i&\in [0,1],\quad\forall \set_i\in\Sets^-\\
y_j&\in [0,1],\quad\forall \element_j\in \Elements^-
\end{align}

Let $x_i^*$ and $y_j^*$ be the values of variables in the optimum solution.
We normalize the values $x_i^*$ defining $z_i^* = x^*_i/\newsizeGroup_h$ for every color $\group_h\in \Groups$, and every $\set_i\in\Sets^-_h$.
For each color $\group_h$, we sample uniformly at random $\newsizeGroup_h$ sets with replacement according to the probabilities $z^*_i$. Let $\kcover$ be the family of sets we sample.

If for a color $\group_h$, the number of distinct sets picked from the sampling procedure is less than $\newsizeGroup_h$, then we add arbitrary sets from this color in order to have exactly $\newsizeGroup_h$ sets.
Similarly to the proof of Lemma~\ref{lem:lem1}, we have,
\vspace{-5mm}
\begin{align*}
Pr[\element_j &\text{ not covered by a set in }\kcover]\leq\prod_{\group_h\in\Groups}\left(1-\sum_{i: \element_j\in \set_i, \set_i\in\Sets^-_h}\frac{x_i^*}{\newsizeGroup_h}\right)^{\newsizeGroup_h}\\&
\leq \prod_{\group_h\in\Groups}\left(\frac{1}{e^{\sum_{i: \element_j\in \set_i, \set_i\in\Sets^-_h}\frac{x_i^*}{\newsizeGroup_h}}}\right)^{\newsizeGroup_h}
=\prod_{\group_h\in\Groups} \frac{1}{e^{\sum_{i: \element_j\in \set_i, \set_i\in\Sets^-_h}x_i^*}}\\&
=\frac{1}{e^{\sum_{\group_h\in\Groups}\sum_{i: \element_j\in \set_i, \set_i\in\Sets^-_h}x_i^*}}
=\frac{1}{e^{\sum_{i: \element_j\in \set_i}x_i^*}} 
\leq \frac{1}{e^{y_j^*}}.
\end{align*}
Let $Z_i$ be a random variable that returns $1$ if $\element_j\in \kcover$, and $0$ otherwise. We have,
$E[\sum_j Z_j] \geq (1 - \frac{1}{e}) OPT_{LP} \geq (1 - \frac{1}{e}) OPT$.

After solving the \textsf{LP} in $\mathcal{L}(n + k, 2n + 2m \cdot k)$ time, we sample $\newsizeGroup_h$ times from each color $\group_h$. Overall, we spend $O(\mathcal{L}(n + k \cdot 2n + 2m \cdot k) + \newsize)$ time.
The randomized rounding algorithm is executed in every iteration of the greedy algorithm.
The greedy algorithm executes $O(m)$ iterations. Using an analysis similar to \S~\ref{subsec:fasterAlg}, we bound the expected number of iterations to $O(\frac{|\optCover|}{\newsize}\log n)$.
Following the same arguments as in \S~\ref{subsec:fasterAlg}, we conclude with the next theorem.
\vspace{-1mm}
\begin{theorem}
    There is a randomized $\frac{e}{e - 1} (\ln n + 1)$-approximation algorithm for the Generalized \fairprob{} problem that runs in $O(m(\newsize + \mathcal{L}(n + k, 2n + 2m \cdot k)))$ time. The approximation factor holds in expectation. The same algorithm also runs in $O(\frac{|\optCover|}{\newsize}\mathcal{L}(n + k, 2n + 2m \cdot k)\log n)$ expected time.
\end{theorem}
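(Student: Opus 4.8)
The plan is to run the generalized greedy algorithm of the previous theorem --- which in each round adds a family of $\newsize$ sets with exactly $\newsizeGroup_h$ sets of color $\group_h$, chosen to cover the maximum number of still-uncovered elements --- but to replace its NP-hard per-round step by the $\mathsf{LP}$-based randomized rounding described above. The argument then rests on two pillars already available to us. First, the generalization of Lemma~\ref{lem:fairapprox}: a fair cover is in bijection with a cover of the ordinary set-cover instance on ground set $\Elements$ whose ``super-sets'' are all unions of $\newsize$ sets meeting the per-color quotas, and the optimum $\optCover$ of \Gfairprob{} maps to a super-set cover of size $|\optCover|/\newsize$ (note $\newsize \mid |\optCover|$), so the super-set optimum is at most $|\optCover|/\newsize$; moreover the ``enough sets'' assumption ($\Omega(\log n)\,\fractionGroup_h|\optCover|$ sets per color) guarantees that every round still has at least $\newsizeGroup_h$ unused sets of color $\group_h$, so the per-round $\mathsf{LP}$ stays feasible. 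Second, the theorem of Young~\cite{young2008greedy}: a greedy set-cover procedure that in each round covers a $\beta$-fraction of the maximum achievable number of new elements is a $\tfrac1\beta(\ln n+1)$-approximation; we will get $\beta = 1-\tfrac1e$ (in expectation), hence the claimed factor $\tfrac{e}{e-1}(\ln n+1)$ on the super-set instance and therefore on \Gfairprob{}.

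The key step is the per-round guarantee, which I would establish by mirroring Lemmas~\ref{lem:lem1}--\ref{lem:lem2} with the sampling-with-replacement rounding. After solving the relaxed program and normalizing $z_i^* = x_i^*/\newsizeGroup_h$ inside each color, we draw $\newsizeGroup_h$ i.i.d.\ samples per color $\group_h$; the probability that $\element_j$ is missed by color $\group_h$ is $\bigl(1-\sum_{i:\element_j\in\set_i,\ \set_i\in\Sets_h^-}x_i^*/\newsizeGroup_h\bigr)^{\newsizeGroup_h}\le e^{-\sum_{i:\element_j\in\set_i,\ \set_i\in\Sets_h^-}x_i^*}$ by the inequality $(1-x/t)^t\le e^{-x}$, and multiplying over the colors together with the constraint $\sum_{i:\element_j\in\set_i}x_i^*\ge y_j^*$ yields $\Pr[\element_j\ \text{uncovered}]\le e^{-y_j^*}$, hence $\Pr[\element_j\ \text{covered}]\ge(1-\tfrac1e)y_j^*$ and $\ee[\#\ \text{covered}]\ge(1-\tfrac1e)\mathsf{OPT}_{\mathsf{LP}}\ge(1-\tfrac1e)\mathsf{OPT}$. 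If some color produces fewer than $\newsizeGroup_h$ distinct sets I would pad with arbitrary unused sets of that color: padding never decreases coverage and restores the exact quota, so the expected-coverage bound is untouched and the returned family --- and hence the final cover --- is exactly fair.

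Finally I would assemble the running time. A preprocessing step computes $\newsize=\operatorname{lcm}$ of the denominators of the $\fractionGroup_h$ and $\newsizeGroup_h=\fractionGroup_h\newsize$, with $\newsize\le\mu$ (otherwise no fractional cover exists); each round solves one $\mathsf{LP}$ with $n+k$ variables and $2n+2mk$ constraints in $\mathcal{L}(n+k,2n+2mk)$ time, draws $\newsize$ samples, and refreshes the coverage counters, so a round costs $O(\newsize+\mathcal{L}(n+k,2n+2mk))$; bounding the number of rounds trivially by $O(m)$ gives the first stated bound, and the same potential argument as in~\S\ref{subsec:fasterAlg} --- each round removes $\newsize$ sets and the output has size $O\bigl(\tfrac{e}{e-1}(\ln n+1)|\optCover|\bigr)$, so there are $O\bigl(\tfrac{|\optCover|}{\newsize}\log n\bigr)$ rounds --- gives the second. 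I expect the only real obstacle to be that the $(1-\tfrac1e)$-coverage bound holds only in expectation; as in the proof of Theorem~\ref{thm:fasterAlg}, one handles this by conditioning on the history so that the number of still-uncovered elements shrinks by a factor $\bigl(1-\tfrac{1-1/e}{|\optCover|}\bigr)$ in expectation per round, which simultaneously yields the expected approximation ratio and the expected bound on the round count.
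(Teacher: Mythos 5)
Your proposal is correct and follows essentially the same route as the paper's proof: the same per-color LP with quotas $\newsizeGroup_h$, the same normalization $z_i^*=x_i^*/\newsizeGroup_h$ and sampling with replacement, the same $(1-x/t)^t\le e^{-x}$ bound yielding per-round expected coverage $(1-\tfrac1e)\mathsf{OPT}_{\mathsf{LP}}$, padding to restore exact quotas, and the same two round-count bounds ($O(m)$ trivially and $O(\tfrac{|\optCover|}{\newsize}\log n)$ from the output-size bound) combined with Young's $\tfrac1\beta(\ln n+1)$ result. Your explicit remark about conditioning on the history to handle the expectation-only per-round guarantee is a point the paper leaves implicit, but it does not change the argument.
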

\vspace{-4mm}
\subsection{Extensions to  \GfairWprob{}}
\vspace{-1mm}
Using the results in \S~\ref{sec:WFSC}, all our algorithms for the weighted case can be extended to the generalized \fairprob{}. Skipping the details, we have the following theorem.
\vspace{-1mm}
\begin{theorem}
    \label{thm:fairGWCover}
    There exists a $\Wratio(\ln n + 1)$-approximation algorithm for the Generalized \fairWprob{} problem that runs in $O(m^\newsize n)$ time.

Furthermore, there exists a randomized $\frac{e}{e-1}\Wratio(\ln n + 1)$-approximation algorithm for the Generalized \fairWprob{} problem that runs in $O(mn\cdot \mathcal{L}(n+k,2n+2m\cdot k+1)+m\newsize n^3)$ time. The approximation factor and the running time hold in expectation.
\end{theorem}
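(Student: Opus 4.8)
The plan is to lift the two weighted algorithms of \S~\ref{sec:WFSC} to the generalized setting in the same way that \S~\ref{appndx:Sec6} lifts the unweighted algorithms of \S~\ref{sec:FSC}. In each iteration of the greedy algorithm we no longer pick one set per color but a whole family $C$ of $\newsize$ sets containing exactly $\newsizeGroup_h$ sets of color $\group_h$, chosen so as to minimize the density $\weight(C)/|C\cap\uncovered|$. Since the greedy operates in integral rounds, every prefix of the selection — and hence the returned cover $\Wcover$ — contains exactly $\newsizeGroup_h$ sets of color $\group_h$ in each round, so $|\Wcover\cap\Sets_h|/|\Wcover| = \fractionGroup_h$ and $\Wcover$ is a fair cover.

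For the first (exact) algorithm I would follow the proof of Lemma~\ref{lem:fairWapprox} almost verbatim. Build the weighted set-cover instance $(\Elements,\Sets',\weight')$ whose element set is $\Elements$ and whose sets are the $O(\prod_h m^{\newsizeGroup_h}) = O(m^{\newsize})$ unions $\set_C = \bigcup_{\set\in C}\set$ ranging over admissible families $C$, with $\weight'(\set_C) = \weight(C)$. Any fair cover produced by the algorithm maps to a cover of this instance of the same weight, and conversely, so $\weight'(C^*) \le \weight(\optWCover)$ for the optimal $C^*$. The step to re-prove is that in each iteration the selected family has density within a factor $\Wratio$ of the best admissible density; this uses the assumption that each group contains $\Omega(\Wratio\log n)\fractionGroup_h|\optCover|$ sets, which is enough so that across the $O(\frac{|\optCover|}{\newsize}\log n)$ greedy rounds we never exhaust the fresh sets of any color. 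As in Lemma~\ref{lem:fairWapprox}, we take the optimal-density admissible family and repeatedly replace an already-chosen set by an unused set of the same color covering at least as many uncovered elements; each swap does not decrease coverage and inflates the family weight by a factor at most $\Wratio$, and the resulting family is feasible for our selection rule, which gives the density comparison. Invoking~\cite{young2008greedy} — a greedy weighted set cover using $\beta$-approximate density selection returns a $\beta(\ln n+1)$-approximation — gives $\weight(\Wcover)\le \Wratio(\ln n+1)\weight'(C^*)\le \Wratio(\ln n+1)\weight(\optWCover)$. Enumerating and maintaining the $O(m^{\newsize})$ densities costs $O(m^{\newsize}n)$ time.

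For the faster algorithm I would run, in each greedy iteration, the $\tau$-sweep LP-rounding procedure of Theorem~\ref{thm:WeightfairMaxKCover}, but with the color constraints $\sum_{i:\set_i\in\Sets^-_h}x_i = \newsizeGroup_h$ replacing $\sum_{i:\set_i\in\Sets^-_h}x_i = 1$, and with the sampling of \S~\ref{appndx:Sec6}: normalize $z_i^* = x_i^*/\newsizeGroup_h$ on each color and draw $\newsizeGroup_h$ sets with replacement from color $\group_h$ according to $z_i^*$. The ``not covered'' estimate of Lemma~\ref{lem:lem1} becomes $Pr[\element_j\notin\kcover^{(\tau)}]\le \prod_{\group_h\in\Groups}\left(1-\sum_{i:\element_j\in\set_i,\set_i\in\Sets^-_h} x_i^*/\newsizeGroup_h\right)^{\newsizeGroup_h}\le e^{-y_j^*}$, so the expected coverage of $\kcover^{(\tau)}$ is still at least $(1-\frac{1}{e})\tau$, the reverse-Markov argument of Lemma~\ref{lem:technPr} still bounds the expected number of resampling trials at $O(n)$, and the $\tau$-sweep produces an $\frac{e}{e-1}$-approximation for the weighted generalized $k$-color-cover subproblem in $O(n\mathcal{L}(n+k,2n+2\mu+1)+\newsize n^3)$ expected time. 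Feeding this as a $\frac{e}{e-1}\Wratio$-approximate density oracle into the greedy framework above and reusing the analysis of Lemma~\ref{lem:fairWapprox} together with~\cite{young2008greedy} gives the ratio $\frac{e}{e-1}\Wratio(\ln n+1)$; multiplying the per-iteration cost by the $O(m)$ bound on the number of iterations and substituting $\mu = mk$ yields the stated running time.

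The main obstacle I expect is the weight bookkeeping in the exact algorithm: unlike the unweighted case, making the optimal-density family feasible for the selection rule changes the numerator of its density, so one must argue carefully that coverage (the denominator) does not drop under the swaps and that the cumulative weight inflation stays bounded by a single $\Wratio$ factor rather than one per swapped set. A secondary point is to verify that the amount of fresh sets consumed per color over all greedy rounds is exactly what the \GfairWprob{} ``enough sets'' assumption $\Omega(\Wratio\log n)\fractionGroup_h|\optCover|$ provides, so that a same-color replacement is always available when a swap is needed.
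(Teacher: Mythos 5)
Your proposal is correct and follows exactly the route the paper intends: the paper itself gives no proof of Theorem~\ref{thm:fairGWCover} (it explicitly ``skips the details''), and your construction is the natural combination of the density-greedy swap argument of Lemma~\ref{lem:fairWapprox}, the $\tau$-sweep LP rounding of Theorem~\ref{thm:WeightfairMaxKCover}, and the $\newsizeGroup_h$-per-color machinery of Appendix~\ref{appndx:Sec6}. The ``main obstacle'' you flag dissolves immediately: each swap replaces an already-used set (contributing nothing to the uncovered count) by a fresh set of the same color whose weight is at most $\Wratio$ times larger, so coverage cannot drop and the family weight inflates by at most a single factor of $\Wratio$ after summing over all swaps.
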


\end{document}